\documentclass[
 reprint,
 aps,
pra,
letterpaper,
longbibliography,floatfix]{revtex4-2}
\usepackage{natbib}
\usepackage{amsthm, amsmath,amssymb}
\usepackage{graphicx}
\usepackage{dcolumn}
\usepackage{hyperref}

\usepackage{ragged2e}
\hypersetup{
    colorlinks=true,
    linkcolor=blue,
    filecolor=blue, 
    citecolor=blue,
    urlcolor=blue,
}
\usepackage{braket}
\usepackage{bm}

\usepackage{multirow}

\newcommand{\tr}{\mathrm{Tr}}

\newcommand{\cC}{\mathcal{C}}

\newcommand{\cE}{\mathcal{E}}

\newcommand{\cH}{\mathcal{H}}

\newcommand{\cK}{\mathcal{K}}

\newcommand{\cM}{\mathcal{M}}
\newcommand{\cN}{\mathcal{N}}
\newcommand{\cO}{\mathcal{O}}
\newcommand{\cP}{\mathcal{P}}

\newcommand{\cR}{\mathcal{R}}
\newcommand{\cS}{\mathcal{S}}

\newcommand{\cV}{\mathcal{V}}

\newtheorem{theorem}{Theorem}
\newtheorem{corollary}{Corollary}[theorem]
\newtheorem{lemma}{Lemma}

\begin{document}

\title{Noise-adapted recovery circuits for quantum error correction}

\author{Debjyoti Biswas}

\affiliation{Department of Physics, IIT Madras, Chennai - 600036.}

\author{Gaurav M. Vaidya}

\affiliation{Department of Physics, IIT Madras, Chennai - 600036.}

\author{Prabha Mandayam}
\affiliation{Department of Physics, IIT Madras, Chennai - 600036.}

\date{\today}

\begin{abstract}
Implementing quantum error correction (QEC) protocols is a challenging task in today's era of noisy intermediate-scale quantum devices. We present quantum circuits for a universal, noise-adapted recovery map, often referred to as the Petz map, which is known to achieve close-to-optimal fidelity for arbitrary codes and noise channels. While two of our circuit constructions draw upon algebraic techniques such as isometric extension and block encoding, the third approach breaks down the recovery map into a sequence of two-outcome POVMs. In each of the three cases we improve upon the resource requirements that currently exist in the literature. Apart from Petz recovery circuits, we also present circuits that can directly estimate the fidelity between the encoded state and the recovered state. As a concrete example of our circuit constructions, we implement Petz recovery circuits corresponding to the $4$-qubit QEC code tailored to protect against amplitude-damping noise. The efficacy of our noise-adapted recovery circuits is then demonstrated through ideal and noisy simulations.

\end{abstract}

\maketitle

\section{Introduction \label{sec:intro}}

Quantum states are inherently fragile and error-prone, rendering the current generation of quantum processors extremely noisy~\cite{preskill2018}. Quantum error correction (QEC)~\cite{terhal_qec} provides a means by which quantum information maybe protected from noise, paving the way for robust and scaleable quantum processors. The standard approach to QEC, which relies on general-purpose codes that correct for \emph{arbitrary} single-qubit noise, is however rather resource-intensive. The smallest surface codes that correct for arbitrary single-qubit errors, for example, require at least $13$ physical qubits for a single round of QEC~\cite{tomita2014}. 

Noise-adapted QEC, on the other hand, aims to identify the dominant noise affecting the quantum system and tailor the quantum code and recovery to correct for that specific noise model. It has been shown~\cite{leung, hkn_pm2010, jayashankar2020finding} that such noise-adapted QEC schemes often require fewer resources while achieving comparable levels of error mitigation as general-purpose QEC. While there have been several analytical and numerical studies to identify noise-adapted quantum codes and adaptive recovery maps for specific noise models (see \cite{jayashankar2022} for a recent review), proposals to implement such noise-adapted QEC schemes have been few and far in between.

One key challenge in realising noise-adapted QEC is that of implementing the recovery map associated with such schemes. Since standard QEC works by breaking down arbitrary noise in terms of Pauli errors, the recovery operation that follows syndrome detection is simply a Pauli frame change~\cite{nielsen}. Adaptive QEC, on the other hand, does not view the noise in terms of a Pauli error basis and hence typically requires non-trivial recovery circuits corresponding to different error syndromes~\cite{fletcher2008, jayashankar_ft}. Constructing such non-Pauli recovery circuits is often hard -- indeed, there are very few explicit adaptive recovery schemes discussed in the literature, even for the well-studied case of amplitude-damping noise. 

An important universal prescription for noise-adapted recovery is the so-called \emph{Petz map}~\cite{petz2003, barnum2002}. This is a completely positive trace-preserving (CPTP) map which has served as an important analytical tool in the context of approximate QEC~\cite{hkn_pm2010, mandayam2012} and noise-adapted QEC~\cite{jayashankar2020finding}. Variants of the Petz map also play an important role in quantum Shannon theory~\cite{junge2018} and more recently, in the context of operator reconstruction in the AdS/CFT correspondence~\cite{chen2020}. 

A quantum algorithm to implement the Petz map was proposed recently~\cite{gilyen2022_petz}, providing, for the first time, a systematic procedure for circuit realizations of the map. However, this algorithm is geared towards implementing the \emph{state-specific} Petz map~\cite{barnum2002} rather than the \emph{code-specific} Petz map~\cite{hkn_pm2010}. The latter, which can be thought of as a special case of the state-specific Petz map, is arguably of greater relevance in the context of quantum error correction since it has been shown to correct with near-optimal fidelity for the action of a given noise channel $\mathcal{E}$ on a quantum code $\cC$~\cite{hkn_pm2010}.

In the present work, we demonstrate three different circuit constructions for implementing the code-specific Petz map and estimate the resource requirements in each case. Two of these approaches are based on known algebraic techniques for implementing quantum dynamical maps~\cite{lloyd2001engineering, terashima2005nonunitary}. The third approach combines newly developed algorithmic techniques such as block encoding~\cite{Gily_n_2019} with the well-known isometric extension. As a concrete use case, we obtain Petz recovery circuits for the specific case of a $4$-qubit code~\cite{leung} tailored to protect against single-qubit amplitude-damping noise. Finally, we simulate these circuits on noisy superconducting processors available on the IBMQ platform and benchmark their performance in terms of the fidelity of the recovered state.

The rest of the paper is organized as follows. In Sec.~\ref{sec:prelim}, we introduce the code-specific Petz map and briefly discuss its role as a near-optimal, noise-adapted recovery map. In Sec.~\ref{sec:petz_circuit}, we describe our first approach to implementing the Petz map using its isometric extension. In Sec.~\ref{sec:petz_povm}, we describe a POVM-based approach to implement the Petz map. Finally, in Sec.~\ref{sec:petz_qsvt}, we describe our circuit construction that combines the block encoding technique with isometric extension. We discuss the example of the $4$-qubit code subject to amplitude-damping noise in Sec.~\ref{sec:petz results} and obtain fidelities corresponding to the different implementations of the Petz map in this case. We conclude with a summary and future outlook in Sec.~\ref{sec:summary}.

\section{Preliminaries}\label{sec:prelim}

We begin with a brief review of the Petz recovery map. Recall that a quantum noise channel is modelled as a completely positive trace-preserving (CPTP) map $\cE$ acting on the set of states $\cS(\cH)$ of a Hilbert space $\cH$. Such a map is characterized by a set of Kraus operators $\{E_{i}\}$, satisfying $\sum_{i=1}^{N}E_{i}^{\dagger}E_{i} = I$. 

\subsection{The Petz map}\label{sec:petz}

The Petz map was first defined in the context of reversing noisy dynamics on a specific system state~\cite{barnum2002}. The action of the \emph{state-specific} Petz map corresponding to a state $\sigma \in \cS(\cH)$ and noise map $\cE$, denoted as $\cR_{\sigma, \cE}$, is given by,
\begin{align}
    \cR_{\sigma, \cE} (.) &= \sqrt{\sigma} \, \cE^{\dagger} \,\left(\cE(\sigma)^{-1/2}(.)\cE(\sigma)^{-1/2}\right) \, \sqrt{\sigma}, \nonumber \\
    &=  \sqrt{\sigma}  \sum_{i=1}^{N} E_{i}^{\dagger}\left(\cE(\sigma)^{-1/2}(.)\cE(\sigma)^{-1/2} \right) E_{i} \sqrt{\sigma} . \label{eq:petz_state}  
\end{align}
Here, $\cE(\sigma) = \sum_{i}E_{i}\sigma E_{i}^{\dagger}$ denotes the action of the channel on the state $\sigma$ and  $\cE^{\dagger}\sim \{E_{i}^{\dagger}\}$ is the adjoint map corresponding to the map $\cE$. The map $\cR_{\sigma,\cE}$ is clearly completely positive (CP) and trace non-increasing; it can be normalised to make it trace-preserving (TP) as well.

From the definition in Eq.~\eqref{eq:petz_state}, it is clear that the state-specific Petz map \emph{perfectly} reverses the action of the noise $\cE$ on the state $\rho$. It was shown in~\cite{barnum2002} that for an ensemble of states $\{p_{i},\rho_{i}\}$, affected by noise $\cE$, the Petz map $\cR_{\rho, \cE}$ corresponding to the state $\rho = \sum_{i}p_{i}\rho_{i}$ is a near-optimal recovery map. Here, optimality was defined in terms of the average entanglement fidelity between the ideal and noisy states. 

In our work, we focus on the \emph{code-specific} Petz map defined in~\cite{hkn_pm2010}. Recall that an $[n,k]$ quantum code $\cC$ encoding $k$ qubits in $n$, is a $2^{k}$-dimensional subspace of the $n$-qubit space.  Corresponding to a quantum code $\cC$ and noise map $\cE$, the Petz recovery channel is defined as, 
\begin{align}
\cR_{P, \cE} (.) &= P\,\cE^{\dagger} \,(\cE(P)^{-1/2}(.)\cE(P)^{-1/2})\,P \label{eq:petz} \\
&= P \sum_{i=1}^{N} E_{i}^{\dagger}\left(\cE(P)^{-1/2}(.)\cE(P)^{-1/2} \right) E_{i} P, \nonumber
\end{align}
where $P$ is the projector onto the codespace $\cC$ and $\cE(P) = \sum_{i}E_{i}PE_{i}^{\dagger}$ denotes the action of the noise map on the codespace. The map $\cR_{P,\cE}$ can be thought of as a special case of the map $\cR_{\sigma,\cE}$, where the state $\sigma$ is chosen to be the maximally mixed state $(P/2^{k})$ on the codespace. Thus, $\cR_{P,\cE}$ is also clearly completely positive (CP) and can be normalised to make it trace-preserving (TP) as well. Furthermore, the composite map $\cR_{P,\cE}\circ\cE$, which denotes the action of noise followed by Petz recovery, is unital and preserves the identity operator on the codespace. We also note that if the noise channel $\cE$ is perfectly correctable on a codespace $\cC$, the corresponding code-specific Petz recovery is indeed the same as the standard (perfect) recovery map~\cite{hkn_pm2010}.

It was shown in~\cite{hkn_pm2010} that $\cR_{P,\cE}$ is a near-optimal recovery map for the codespace $\cC$ under the action of noise $\cE$, where optimality is defined in terms of the \emph{worst-case fidelity} on the codespace. Recall that the fidelity measure $F$, based on the Bures metric, between a pure state $|\psi\rangle$ and a mixed state $\rho$ is defined as~\cite{nielsen},
\begin{equation}
    F^{2}(|\psi\rangle, \rho) = \langle \psi | \rho |\psi\rangle . \label{eq:fidelity}
\end{equation}
The worst-case fidelity corresponding to codespace $\cC$ and the composite map $\cR_{P,\cE}\circ\cE$ is then obtained by minimizing the fidelity over all states in the codespace, namely,
\begin{equation}
    F^{2}_{\rm min} (\, \cC, \cR_{P,\cE}\circ\cE \, ) = \min_{|\psi\rangle \in \cC} \langle \psi | \cR_{P,\cE}\circ\cE (|\psi\rangle\langle \psi|)| \psi\rangle. \label{eq:wc_fidelity}
\end{equation}
The worst-case fidelity measure thus guarantees that the \emph{all} states in the codespace are preserved to a high degree. The near-optimality result in~\cite{hkn_pm2010} states that the Petz map defined in Eq.~\eqref{eq:petz} achieves a worst-case fidelity that is close to that of the optimal recovery map corresponding to code $\cC$ and noise $\cE$.

\subsection{Quantum algorithm for the Petz map}\label{sec:petz_algo}

It was recently shown in~\cite{gilyen2022_petz} that a quantum circuit that \emph{approximately} implements the state-specific Petz map in Eq.~\eqref{eq:petz_state} can be realised using the techniques of block encoding~\cite{low2019} and the quantum singular value transform (QSVT)~\cite{Gily_n_2019}, by decomposing the Petz map as sequence of completely positive (CP) maps. Note that the maps in Eq.~\eqref{eq:petz_state} and Eq.~\eqref{eq:petz} can be thought of as being composed of three (CP) maps. For example, the state-specific map $\mathcal{R}_{\sigma,\cE}$ can be decomposed in terms of (a) the adjoint map $\cE^{\dagger} (.)$ with Kraus operators $\{E_{i}^{\dagger}\}$, (b) the normalization or amplification map $\cE(\sigma)^{-1/2}(.)\cE(\sigma)^{-1/2}$ and (c) the projection map $\cP_{\sigma}(.) = \sqrt{\sigma}\,(.) \sqrt{\sigma}$. These maps are trace-increasing in general, but the overall map obtained by composing these three maps is indeed trace-preserving.

The quantum algorithm to implement the state-specific Petz recovery map $\cR_{\sigma, \cE}$ described in~\cite{gilyen2022_petz}, proceeds via the following steps. First, it assumes that there exist quantum circuits that realise approximate \emph{block encodings} of the state $\sigma$ and the operator $\cE(\sigma)$. Furthermore, it also assumes the existence of a quantum circuit that implements an \emph{isometric extension} of the adjoint map $\cE^{\dagger}(.)$. Finally, the algorithm uses the QSVT and the technique of amplitude amplification~\cite{berry2014_amp} to realise the $\cE(\sigma)^{-1/2}$ and the $\sqrt{\sigma}$ operations. In what follows, we will briefly review the concepts of isometric extension and block encoding, since they are used in some of our circuit constructions as well.

\subsubsection{Isometric Extension of a CPTP map}

Consider a CPTP map $\cE: \cS(\cH_{A})\rightarrow \cS(\cH_{B})$ mapping states on Hilbert space $\cH_{A}$ to states on $\cH_{B}$. Let $\cH_{E}$ denote the Hilbert space of an ancillary system $E$ and $\rho_{B} \in \cS(\cH_{E})$ denote some fixed state of the ancilla. The isometric extension of $\cE$ is defined to be the unitary $U^{\cE}_{AE \rightarrow B}$ mapping states on $\cH_{A} \otimes \cH_{E}$ to states on $\cH_{B}$, such that, 
\begin{equation}
    \tr_E [U^{\cE}_{AE\rightarrow B}(\rho_A \otimes \rho_{E})(U^{\cE}_{AE\rightarrow B})^{\dagger}] = \cE(\rho_A). \label{eq:iso_extn}
\end{equation}
The isometric extension of a map $\cE$ can be implemented via a quantum circuit as shown in Fig.~\ref{fig:isometric_map}. 

\begin{figure}
    \centering

\includegraphics[width = 0.55 \columnwidth]{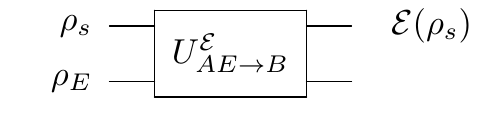}
    \caption{Isometric extension circuit of a map $\cE$}
    \label{fig:isometric_map}
\end{figure}

However, the adjoint map $\cE^{\dag}$ is not a physical map in general and hence does not admit an isometric extension. However, it was shown in~\cite{gilyen2022_petz} that the adjoint map can be implemented via a quantum circuit, starting with the isometric extension for the forward channel $\cE$ and using the procedure of block encoding described below.

\subsubsection{Block Encoding}

The idea of block encoding is to represent any arbitrary matrix $A$ as the top-left block of a unitary matrix. If the unitary $U^A$ is an \emph{exact} block encoding of $A$, then it has the following structure.
\begin{align}\label{eq:7}
     U^A &= \begin{bmatrix}A & \quad . \\
                            . & \quad. \end{bmatrix} .
 \end{align}
We formally define the block encoding for an $s$-qubit operator here and note that this can be generalized to arbitrary dimensions as well~\cite{Gily_n_2019}. An $(s+a)$-qubit unitary $U$ is said to be an $(\alpha,a,\epsilon)$-block encoding of an $s$-qubit operator $A$, if,
 $$ \parallel A- \alpha(\langle0|^{\otimes a} \otimes I_{2^s} ) U ( |0\rangle^{\otimes a} \otimes I_{2^s} ) \parallel \; \leq \; \epsilon, $$
where $\parallel (.) \parallel$ denotes the operator norm, $\alpha,\epsilon \in \mathbb{R}_+$ and $a\in \mathbb{N}$. $U$ is said to be an exact block encoding of the matrix $(A/\alpha)$ if $\epsilon=0$. The unitary $U$ is thus an exact block encoding of $A$ if $\alpha=1$ and $\epsilon=0$. 
In order to get the action of the $s$-qubit operator $A$, we measure the $a$-qubit ancilla and keep the $s$-qubit output state only if the $a$-qubit ancilla returns the all-zero after the measurement. This procedure is depicted in Fig \ref{fig:blenc_Agen}.

Of particular interest here is the existence of unitaries that provide block encodings of density operators. In this context, we note the following result from \cite{Gily_n_2019}, which shows that an exact block encoding for any density operator $\rho$ can be constructed via a \emph{purifying} isometry on an extended space.
\begin{lemma}\label{lem:a}
Suppose that $ \rho$ is an $s$-qubit density operator and $G$ is an (a + s)-qubit unitary that when acting on the $|0\rangle|0\rangle$ input state gives a purification $|0\rangle|0\rangle \rightarrow |\rho\rangle$ on the $a+s$-qubit system, such that, \, $\tr_a |\rho\rangle |\rho\rangle=\rho$. Then the operator $U$ given by
    \begin{equation}
        U = (G^{\dagger}\otimes I_{2^s}) (I_{2^a} \otimes SWAP_s) (G \otimes I_{2^s}),
    \end{equation}
is an exact block encoding of the density matrix $\rho$. 
\end{lemma}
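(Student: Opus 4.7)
The plan is to verify the defining block-encoding identity
\[
\rho \;=\; (\langle 0|^{\otimes (a+s)} \otimes I_{2^s})\, U \, (|0\rangle^{\otimes (a+s)} \otimes I_{2^s})
\]
directly, by computing the action of $U$ on an arbitrary input state of the form $|0\rangle^{\otimes(a+s)}|\psi\rangle$ and post-selecting the first $a+s$ (ancilla) qubits on $|0\rangle^{\otimes(a+s)}$. Unitarity of $U$ is automatic since $G$, $G^\dagger$, and $SWAP_s$ are all unitary, so only the block identity needs to be established. As a preparatory step I would fix a Schmidt decomposition of the purification, $|\rho\rangle = \sum_k \sqrt{p_k}\,|k\rangle_a \otimes |\phi_k\rangle_s$, where $\{|\phi_k\rangle\}$ is an eigenbasis of $\rho$ with eigenvalues $p_k$; the hypothesis that $|\rho\rangle$ purifies $\rho$ then reads $\sum_k p_k|\phi_k\rangle\langle\phi_k| = \rho$.

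Next, I would walk through $U$ layer by layer. The first factor $G \otimes I_{2^s}$ takes the input to $|\rho\rangle_{as}\otimes|\psi\rangle_s = \sum_k \sqrt{p_k}\,|k\rangle_a |\phi_k\rangle_s |\psi\rangle_s$. The middle factor $I_{2^a}\otimes SWAP_s$ then exchanges the Schmidt vector $|\phi_k\rangle$ (sitting in the second $s$-register) with the input $|\psi\rangle$ (in the third register). Finally $G^\dagger\otimes I_{2^s}$ acts on the first $a+s$ qubits, and projecting those qubits onto $|0\rangle^{\otimes(a+s)}$ reduces, via the key identity $\langle 0|^{\otimes(a+s)} G^\dagger = \langle\rho|$, to evaluating overlaps $\langle\rho|\bigl(|k\rangle_a|\psi\rangle_s\bigr) = \sqrt{p_k}\,\langle\phi_k|\psi\rangle$, where I have used orthonormality of the Schmidt basis on the $a$-register. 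Collecting terms yields $\sum_k p_k|\phi_k\rangle\langle\phi_k|\psi\rangle = \rho|\psi\rangle$, which is precisely the required block-encoding relation with $\alpha=1$ and $\epsilon=0$.

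The main obstacle is purely bookkeeping: one has to carefully track which of the three registers (the $a$-qubit ancilla, the $s$-qubit ancilla, and the $s$-qubit system) each operator acts on, and in particular ensure that $SWAP_s$ is applied between the correct pair of $s$-qubit subsystems so that the $G^\dagger$ on the left is correctly matched with the purification prepared on the right. Once the registers are aligned, no approximation enters at any stage, and the claim follows from a single application of the Schmidt decomposition together with the identity $\langle 0|^{\otimes(a+s)} G^\dagger = \langle\rho|$.
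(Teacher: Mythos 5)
Your argument is correct: the layer-by-layer computation, the Schmidt decomposition $|\rho\rangle=\sum_k\sqrt{p_k}\,|k\rangle_a|\phi_k\rangle_s$, and the identity $\langle 0|^{\otimes(a+s)}G^{\dagger}=\langle\rho|$ together yield $(\langle 0|^{\otimes(a+s)}\otimes I_{2^s})\,U\,(|0\rangle^{\otimes(a+s)}\otimes I_{2^s})|\psi\rangle=\sum_k p_k\langle\phi_k|\psi\rangle\,|\phi_k\rangle=\rho|\psi\rangle$, which is exactly the $(1,a+s,0)$-block-encoding condition. The paper itself states this lemma without proof, importing it from Gily\'en et al.; your derivation matches the standard argument there (which is often phrased via the swap trick $\tr[(A\otimes B)\,SWAP]=\tr[AB]$ rather than an explicit Schmidt expansion, but the two are equivalent), so there is nothing to correct.
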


\begin{figure}[t]
    \centering
 
  \includegraphics[width = 0.55 \columnwidth]{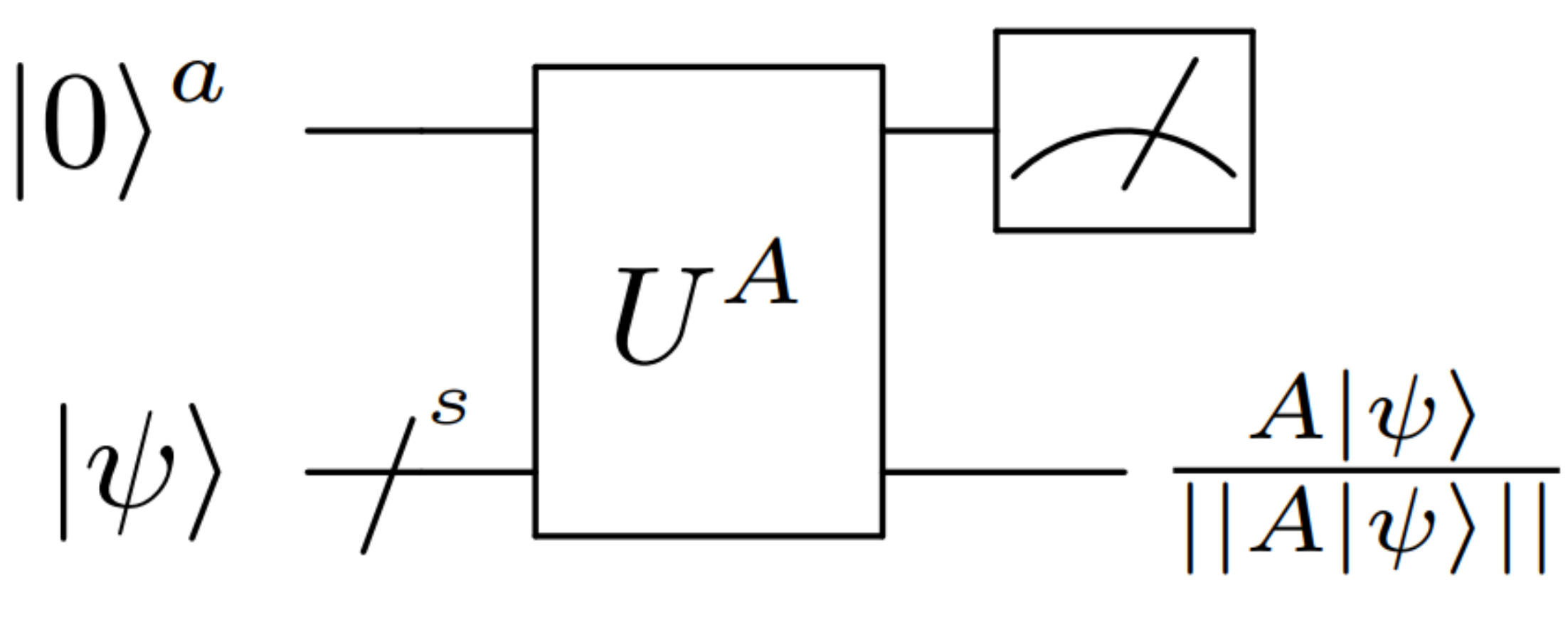}
    \caption{The $(a+s)$- qubit block encoding of A. $U^A$ returns $A|\psi\rangle$ upon measuring the $a$-qubit ancilla, whenever the outcome is the all-zero state. }
    \label{fig:blenc_Agen}
\end{figure}

\section{Petz recovery circuit via isometric extension}\label{sec:petz_circuit}

Arguably, the most direct approach to implementing the code-specific Petz recovery map $\cR_{P,\cE}$ is via its isometric extension. However, this is rather resource intensive in general, requiring, for example, of the order of $dN$ two-level unitaries~\footnote{A two-level unitary is one that acts nontrivially only on a two-dimensional subspace of the $d$-dimensional space. We refer to~\cite{nielsen} for further details.}, for a quantum channel with $N$ Kraus operators acting on a $d$-dimensional space. In what follows, we first present a resource-efficient implementation of the isometric extension of any CPTP map. We then construct a circuit that can directly estimate the fidelity between an arbitrary initial state and the final state after the action of a quantum channel. Both these results are then applied to the specific case of the Petz map $\cR_{P,\cE}$.

\subsection{Isometric extension of the Petz map using two-level unitaries}\label{sec:iso petz}
 
Consider a quantum channel $\cR$ with $N$ Kraus operators acting on a $d$-dimensional Hilbert space $\cH_{S}$. We consider a specific isometric extension of $\cR$, denoted as $\cV_{\cR}$, where the ancillary system starts in a fixed pure state $|0\rangle$, so that,

 \begin{align}\label{eq:iso_fixed}
          \cR(\rho) &= \tr_E [\cV_{\cR}( \rho_{S}\otimes |0_E\rangle \langle0_E|)V^{\dag}_{\cR}] .
      \end{align}
If $\{R_i, \; i \in [0,\ldots, N-1] \}$ denote the Kraus operators of the quantum channel $\cR$, then the isometry $\cV_{\cR}$ has the following block matrix structure on the extended space $\cH_{S} \otimes \cH_{E}$.
        \begin{align}\label{eq:isometry1}
          \cV_{\cR} &= \begin{bmatrix} R_0 & \hdots & \hdots\\
                                       \vdots & \ddots & \vdots\\
                                       R_{N-1} & \hdots& \hdots.
                                     \end{bmatrix}
        \end{align}

Note that the unitary $\cV_{\cR}$ acts on a space of dimension $dN$. The first $N$ blocks of the unitary $\cV_R$ contain the Kraus operators, and the rest of the unitary does not affect the system of interest.

To construct a quantum circuit to implement $\cV_{\cR}$ in Eq.\eqref{eq:isometry1}, we will follow the well-known procedure of decomposing the unitary into a product of two-level unitaries~\cite{nielsen}. According to this prescription, the unitary operator $\cV_{\cR}$ can be realised as a product of $k$ two-level unitaries $\{V_{1}, \ldots, V_{k}\}$ as follows.
\begin{align}
    \cV_{\cR} &= V_1 V_2 \ldots V_k \nonumber \\
    ( V_1 V_2 \ldots V_k)^{\dag} \cV_{\cR} &= I_{dN} \nonumber \\
    \mbox{ where }\qquad k &= \frac{dN (dN-1)}{2}. \label{eq:two-level1}
\end{align}
Here, $I_{dN\times dN}$ denotes the identity operator on a $dN$-dimensional space. 

Consider now, the Petz recovery map $\cR_{P, \cE}$ corresponding to a noise channel $\cE$ with $N$ Kraus operators and an $n$-qu\emph{d}it code $\cC$ with projector $P$. Note that $\cE$ here refers to the single-qu$d$it noise channel with $N$ Kraus operators, which acts in an i.i.d. fashion on the $n$ qu$d$its of the codespace. The Petz recovery map, in this case, contains $N^{n}$ Kraus operators, each acting on a $d^{n}$-dimensional space. As per the upper bound in Eq.~\eqref{eq:two-level1}, the isometric extension circuit for this recovery map could require as many as $d^{n}N^{n}(d^{n}N^{n}-1)/2$ two-level unitaries.  

As we argue below, for our purposes, we do not need to decompose the complete unitary $\cV_{\cR}$. Rather, it suffices to decompose $\cV_{\cR}$ partially in such a way that the isometric extension circuit can be realised with only $d^{2n}N^n$ two-level unitaries. 

\begin{lemma}\label{lem:Petz_isometry}
The Petz map $\cR_{P,\cE}$ corresponding to a noise channel $\cE$ with $N$ Kraus operators and an $n$-qu\emph{d}it code $\mathcal{C}$ with projector $P$, can be implemented exactly using a quantum circuit comprising $d^{2n}N^{n}$ two-level unitaries and $n$ ancillary qu$d$its.
\end{lemma}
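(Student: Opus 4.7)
The plan is to construct the isometric extension $\cV_{\cR_{P,\cE}}$ directly, exploiting the fact that the ancilla is initialized in the fixed state $|0\rangle^{\otimes n}$, so only a subset of the columns of $\cV_{\cR_{P,\cE}}$ need to be specified correctly. First I would write down $\cV_{\cR_{P,\cE}}$ in the block form of Eq.~\eqref{eq:isometry1}. Since the noise $\cE$ acts i.i.d.\ on the $n$ qudits of the codespace, the Petz recovery channel has $N^{n}$ Kraus operators $\{R_{i}\}_{i=0}^{N^{n}-1}$, each a $d^{n}\times d^{n}$ matrix, and $\cV_{\cR_{P,\cE}}$ is a $d^{n}N^{n}$-dimensional unitary whose first $d^{n}$ columns are the stacked Kraus operators. (Implicit here is $N \le d$, so that the ancillary dimension $N^{n}$ embeds into $n$ qudits; this holds for the amplitude-damping example studied later in the paper.)

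The key observation is that the remaining $d^{n}(N^{n}-1)$ columns of $\cV_{\cR_{P,\cE}}$ may be chosen arbitrarily subject only to unitarity, because they correspond to ancillary inputs orthogonal to $|0\rangle^{\otimes n}$ and are never accessed in the protocol of Eq.~\eqref{eq:iso_fixed}. It therefore suffices to find a product of two-level unitaries $W = V_{1}V_{2}\cdots V_{M}$ whose first $d^{n}$ columns agree with those of the stacked-Kraus-operator matrix; the rest of $W$ is an arbitrary but automatically unitary completion. I would construct $W$ using the Gauss-elimination-style procedure of~\cite{nielsen}: for each column index $j=0,1,\ldots,d^{n}-1$, apply a sequence of two-level unitaries that maps $|j\rangle|0\rangle$ into the required column, while leaving intact the columns already fixed.

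Counting then gives the stated bound. Bringing a single length-$D$ vector to standard-basis form costs at most $D-1$ two-level unitaries. For the $j$th column of $\cV_{\cR_{P,\cE}}$, which has length $D=d^{n}N^{n}$, orthogonality with the first $j$ columns already set guarantees that the first $j$ entries vanish automatically, so only $D-1-j$ rotations are needed. Summing,
\begin{equation}
\sum_{j=0}^{d^{n}-1}(d^{n}N^{n}-1-j) \;=\; d^{2n}N^{n} - d^{n} - \tfrac{d^{n}(d^{n}-1)}{2} \;\le\; d^{2n}N^{n}.
\end{equation}
The main technical point I would need to verify carefully is that the two-level rotations chosen while processing the $j$th column do not disturb the structure already enforced on the earlier columns. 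This follows routinely from the correctness of the Nielsen--Chuang decomposition once one is careful about which pairs of basis states each two-level unitary is allowed to couple, but is the one place where the bookkeeping has to be done explicitly rather than quoted off the shelf.
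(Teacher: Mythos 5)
Your proposal is correct and follows essentially the same route as the paper: both exploit that the ancilla starts in the fixed state $|0\rangle^{\otimes n}$ so that only the first $d^{n}$ columns of the isometric extension need to be realized, and both obtain the count $d^{2n}N^{n}$ from the column-by-column Nielsen--Chuang two-level decomposition. Your explicit sum and your remark that $N\le d$ is implicitly required for the ancilla to fit in $n$ qudits are slightly more careful than the paper's presentation, but they do not change the argument.
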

\begin{proof}
According to the two-level unitary decomposition procedure in \cite{nielsen}, we note that for any unitary matrix $U_{D\times D}$ on a $D$-dimensional system we can find $m$ two-level unitaries such that, 
\begin{align}\label{eq:partial_two_level}
    \prod\limits_{i=1}^{m} V_i^{\dag} \, U_{D\times D} &= \begin{pmatrix}
              I_{m} & 0 \\
                   0                 & M_{D-m \times D-m}
    \end{pmatrix} .
\end{align}
For a given $n$-qu\emph{d}it code and a noise channel $\cE$ with $N$ Kraus operators, we know all the $N$ Kraus operators of the Petz recovery $\cR_{P,\cE}$, as defined in Eq.~\eqref{eq:petz}. Let $\{R_{i}\}$ denote the Kraus operators of $\cR_{P,\cE}$. Since these are operators of dimension $d^n \times d^n$, we essentially know the first $d^n$ columns of the isometric extension $\cV_{\cR_{P,\cE}}$. Therefore, according to the Eq.\eqref{eq:partial_two_level}, there exist $N^nd^{2n}$ two-level unitaries $\{ V_1, V_2, \ldots , V_{d^n}\}$ such that,
\begin{align}
    ( V_1 V_2 \ldots V_{N^nd^{2n}})^{\dag} V_{\cR_{P,\cE}} &= \begin{pmatrix} \mathbb{I}_{d^{n}\times d^n} & 0 \\ 0 & * \end{pmatrix}. \label{eq:iso2}
\end{align}

Let the product $ ( V_1 V_2 \ldots V_m)$ be denoted $\Tilde{V}_{\cR_{P,\cE}}$. Then Eq.\eqref{eq:iso2} implies, 
\begin{align}
    \left(\Tilde{V}_{\cR_{P,\cE}}\right)^{\dag} &= \begin{pmatrix} R_0^{\dag} & R_1^{\dag} & \hdots &R_{N-1}^{\dag}\\
                                             \vdots & \vdots & \ddots &\vdots\end{pmatrix}. 
\end{align}
Since the Petz map is a CPTP map, its Kraus operators $\lbrace R_i \rbrace$ satisfy $ \sum\limits_{i=0}^{N-1}\, R_i^{\dag}R_i = \mathbb{I}_{d^n \times d^n}$, thus showing that the action of  unitary $\Tilde{V}_{\cR_{P,\cE}}$ is equivalent to the action of the unitary $\cV_{\cR_{P,\cE}}$ on the first $d^{n}\times d^{n}$ block which is the domain of the Petz map in this case. 
\begin{align}
    (\Tilde{V}_{\cR_{P,\cE}})^{\dag}   \cV_{\cR_{P,\cE}} &= \begin{pmatrix} R_0^{\dag}  & \hdots &R_{N-1}^{\dag}\\
                                             \vdots  & \ddots &\vdots\end{pmatrix} \begin{bmatrix} R_0 & \hdots \\
                                       \vdots & \ddots \\
                                       R_{N-1} & \hdots
                                     \end{bmatrix} \nonumber \\
                                     &= \begin{pmatrix} \sum_{i=0}^{N-1} R_i^{\dag}R_i & 0 \\ 0 & * \end{pmatrix} \nonumber \\
                                     &= \begin{pmatrix} I_{d^n} & 0 \\ 0 & * \end{pmatrix}. \label{eq:iso3}
\end{align}
Putting together Eqs.~\eqref{eq:iso2} and Eq.~\eqref{eq:iso3}, it follows that the Petz map can be exactly realised using $d^{2n}N^n$ two-level unitaries. 

\end{proof}

Note that the reduction in the number of two-level unitaries shown in Lemma~\ref{lem:Petz_isometry} holds more generally for a quantum channel whose isometric extension is defined using a fixed pure state of the ancilla, as in Eq.~\eqref{eq:iso_fixed}. The final step in implementing the isometric extension $\cV_{\cR_{P,\cE}}$ is to identify the set of two-level unitaries $\{V_{1}, V_{2},\ldots, V_{N^nd^{2n}}\}$ that can be done following, for example, the standard procedure described in~\cite{nielsen}. 

Fig.~\ref{fig:std_comb} shows a complete circuit that combines the isometric extension of the noise channel $\cE$ and that of the recovery channel $\cR_{P,\cE}$, along with the encoding unitary $U_{\rm en}$. Note that the Petz recovery circuit shown here for an $n$-qu$d$it code requires $n$ ancillary qubits initialized to the $|0\rangle$ state. 

\begin{figure}[t]
    \centering

\includegraphics[width = 0.9 \columnwidth]{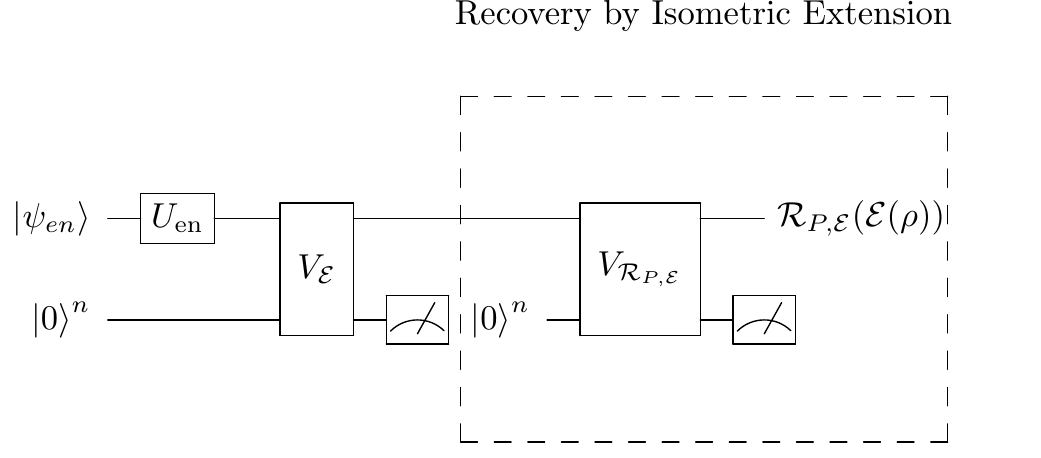}
    \caption{Circuit for Petz recovery $\cR_{P,\cE}$, after the action of noise $\cE$ (implemented via the unitary $V_{\cE}$) and encoding $U_{\rm en}$, using the isometric extension $V_{\cR_{P,\cE}}$.}
   \label{fig:std_comb}
\end{figure}

\subsection{Fidelity Calculation Via Block encoding }\label{sec:fid_blockEnc}
We next demonstrate how the technique of block encoding can be used to construct a quantum circuit that directly estimates the fidelity between a given (initial) state and the final state after the action of noise and recovery. This provides a way to circumvent the need for full-state tomography, which is often expensive to implement, and instead estimate the fidelity in a more direct, resource-efficient way. 

Our approach is distinct from other works in the literature which have proposed quantum algorithms for computing the fidelity between a pair of arbitrary density operators~\cite{cerezo2020, ag_2020_fid,wang_fid}. These algorithms however, assume knowledge of both the density operators and are estimate a function that approximates the actual fidelity function. Here, we address the question of estimating the fidelity between a given $n$-qubit (initial) state $|\psi\rangle$ and the density operator that results after the action of a CPTP map $\cN$ on the state $|\psi\rangle$. Our circuit construction makes use of the isometric extension of the map $\cN$ and a unitary matrix $U$ that prepares the initial state $|\psi\rangle$ from $|0\rangle^{\otimes n}$.\\
Recall that the fidelity between an initial state $|\psi\rangle$ and the corresponding final state $\cN(|\psi\rangle \langle\psi|)$ is calculated as defined in Eq.~\eqref{eq:fidelity}. Instead of calculating the fidelity by performing full state tomography on the output state $\rho = \cN(|\psi\rangle\langle\psi|)$ and then using Eq.~\eqref{eq:fidelity}, we propose a direct method for estimating $F^2$. Specifically, we construct a circuit whose output gives us the fidelity $F^2$ as the probability of getting the all-zero state when measured in the computational basis. 
\begin{theorem}
 Consider the action of a CPTP map $\cN$ on an $n$-qubit pure state $|\psi\rangle$. Let $V_{\cN}$ denote the isometric extension $\cN$ and $U$ be a unitary such that $U|0\rangle^{\otimes n} = |\psi\rangle$. Then, the circuit in Fig.\ref{fig:purf_blenc_1} can be used to estimate the fidelity $F^{2} = \langle \psi|\cN(|\psi\rangle\langle\psi|)|\psi\rangle$ as follows. Upon measuring all the registers in Fig.~\ref{fig:purf_blenc_1} in the computational basis, the probability of getting the all-zero state as the outcome is given by $F^{4}$.
 \end{theorem}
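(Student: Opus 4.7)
The plan is to recognise the circuit as an exact block encoding of $\rho := \cN(|\psi\rangle\langle\psi|)$ sandwiched between $U$ and $U^{\dagger}$ on the system register. Given such a block encoding, preparing the input $|\psi\rangle = U|0\rangle^{\otimes n}$ on the system and measuring every register in the computational basis picks up one copy of $\langle\psi|\rho|\psi\rangle$ from the block-encoding action itself and a second copy from the overlap with $|\psi\rangle$ enforced by the final $U^{\dagger}$, so that the probability of the all-zero outcome becomes $(\langle\psi|\rho|\psi\rangle)^{2} = F^{4}$.

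To make this precise, I would first note that the unitary $G := V_{\cN}\,(U\otimes I_{E})$ acting on $|0\rangle^{\otimes n}|0_{E}\rangle$ produces a purification of $\rho$ in the sense of Lemma~\ref{lem:a}, since by Eq.~\eqref{eq:iso_fixed}
\begin{equation}
\tr_{E}\!\left[G\,|0\rangle^{\otimes n}|0_{E}\rangle\langle 0|^{\otimes n}\langle 0_{E}|\,G^{\dagger}\right] \;=\; \cN(|\psi\rangle\langle\psi|) \;=\; \rho.
\end{equation}
Lemma~\ref{lem:a} then promotes $G$ to an exact block encoding $W = (G^{\dagger}\otimes I_{2^{n}})(I_{E}\otimes \mathrm{SWAP}_{n})(G\otimes I_{2^{n}})$ of $\rho$, in the sense that for any input $|\phi\rangle$ on the $n$-qubit system register,
\begin{equation}
W\!\left(|0\rangle^{\mathrm{anc}}\otimes|\phi\rangle\right) \;=\; |0\rangle^{\mathrm{anc}}\,\rho|\phi\rangle \;+\; |\Phi^{\perp}\rangle,
\end{equation}
with $|\Phi^{\perp}\rangle$ supported entirely outside the ancilla-zero subspace. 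Substituting $|\phi\rangle = U|0\rangle^{\otimes n}$, applying $U^{\dagger}$ on the system after $W$, and projecting onto the all-zero outcome on every register then gives amplitude
\begin{equation}
\langle 0|^{\otimes n}\,U^{\dagger}\rho\,U|0\rangle^{\otimes n} \;=\; \langle\psi|\rho|\psi\rangle \;=\; F^{2},
\end{equation}
because $U^{\dagger}$ acts only on the system register and therefore preserves the ancilla-orthogonality of $|\Phi^{\perp}\rangle$. Squaring the amplitude yields the claimed probability $F^{4}$.

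The main obstacle is purely one of register bookkeeping: one must keep the environment register $E$ attached to $V_{\cN}$, the additional $n$-qubit register introduced by Lemma~\ref{lem:a} through the $\mathrm{SWAP}_{n}$ gate, and the $n$-qubit system register on which $U$ and $U^{\dagger}$ act all distinct, and verify that all three sectors are simultaneously projected onto $|0\rangle$ in the final measurement. Once this separation is made, the theorem reduces to a single-line application of the defining property of an exact block encoding, with no approximation, amplitude amplification, or singular-value transformation required.
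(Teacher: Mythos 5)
Your proposal is correct and follows essentially the same route as the paper: you build the purification $G = V_{\cN}(U\otimes I_{E})$, invoke Lemma~\ref{lem:a} to obtain an exact block encoding of $\rho = \cN(|\psi\rangle\langle\psi|)$, conjugate by $U$ on the system register, and read off the all-zero amplitude $\langle\psi|\rho|\psi\rangle = F^{2}$, whose square gives the stated probability $F^{4}$. Your phrasing via the defining property of an exact block encoding on an arbitrary input $|\phi\rangle$ is a slightly cleaner packaging of the paper's matrix-element computation for $U_{\rm Fid}$, but the argument is the same.
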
\label{thm:fidelity}
\begin{proof}
To calculate the fidelity, we first encode the $n$-qubit density matrix $\rho = \cN (|\psi\rangle\langle\psi|)$) as the top-left block of a unitary matrix, following the procedure in Lemma~\ref{lem:a}. Recall that exact block encoding of an $n$-qubit density matrix $\cN(|\psi\rangle\langle{\psi}|)$, requires an $(n+s)$-qubit \emph{purification unitary}  $G$ for the state $\cN(|{\psi}\rangle\langle{\psi}|)$, such that,
 \begin{equation}
     G |{0}\rangle^{\otimes n}|{0}\rangle^{\otimes s} = |\cN(|{\psi}\rangle\langle{\psi}|)\rangle, \label{eq:purf_1}
 \end{equation} 
 where $ |\cN(|{\psi}\rangle\langle{\psi}|)\rangle$ denotes a purification of the density operator $\cN(|{\psi}\rangle\langle{\psi}|)$. This purification circuit is schematically depicted in Fig.~\ref{fig:purf_1}.

\begin{figure}[t]
    \centering
  \includegraphics[width = 0.6 \columnwidth]{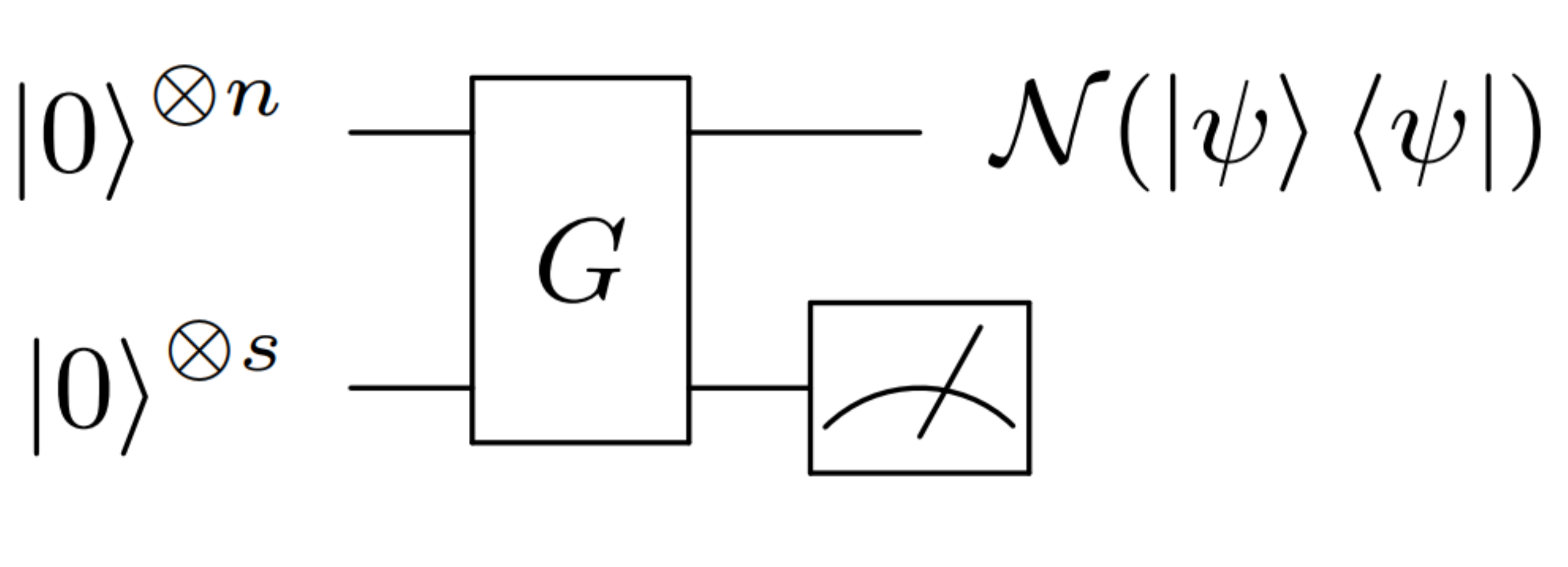}
     \caption{Purification circuit for $\cN(|\psi\rangle\langle\psi|)$.}
     \label{fig:purf_1}
 \end{figure}
 Note that the purification circuit in Fig.~\ref{fig:purf_1} is similar to the isometric extension circuit described in Fig.~\ref{fig:isometric_map}, with the only difference being in the specific choice of input states. We can make these two circuits effectively identical just by adding another unitary $U$ which acts on $|0\rangle^n$ to give the state $|\psi\rangle$, along with identifying that $s=n$. Therefore the purification $G$ is nothing but a matrix product of the isometric extension  $V_{\cN}$ of the channel $\cN$ and $I_{2^{n}\times 2^{n}}\otimes U$ as depicted by the boxed circuit in Fig.~\ref{fig:purf_blenc_1}. 
 
\begin{figure}[t]
    \centering
\includegraphics[width = 0.9 \columnwidth]{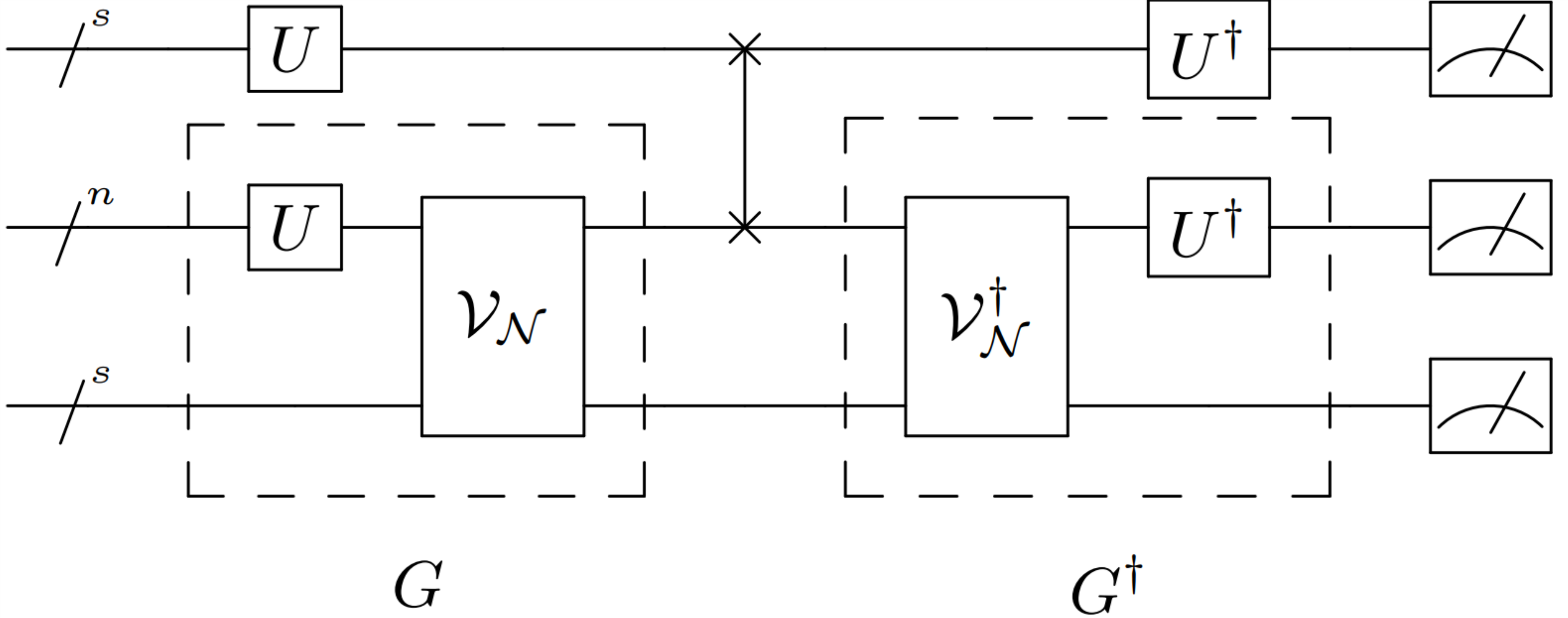}
    \caption{Circuit for the fidelity calculation. The boxed part is the purification unitary $G$ for the state $\cN(|{\psi}\rangle\langle{\psi}|)$.}
    \label{fig:purf_blenc_1}
\end{figure}

Now we analyse the rest of the circuit in Fig.~\ref{fig:purf_blenc_1}.
As explained above, the output of the boxed circuit is the block encoding,
      \begin{equation}
          U^{\cN(|{\psi}\rangle\langle{\psi}|)} = \begin{pmatrix}\cN(|{\psi}\rangle\langle{\psi}|) & * \\ * & * \end{pmatrix}.
      \end{equation}
Therefore, the unitary $(\mathbb{I} \otimes U^{\dag})   U^{\cN(|{\psi}\rangle\langle{\psi}|)} (\mathbb{I} \otimes U)$ 
has the form, 
      \begin{align}\label{eq:proof_1}
        U_{\rm Fid} &= \begin{pmatrix} U^{\dag}\cN(|{\psi}\rangle\langle{\psi}|)U & * \\ * & *\end{pmatrix}.
      \end{align}
    
The action of the unitary $U_{\rm Fid}$ on the input state is, 
   
     \begin{equation*} 
        U_{\rm Fid}|{0}\rangle^{ 3n}= U^{\dag}\cN(|{\psi}\rangle\langle{\psi}|) (|{\psi}\rangle \otimes|{0}\rangle^{ 2n}) +|{*}\rangle.
        \end{equation*}
        Thus the first matrix element of $U_{\rm Fid}$ in the computational basis evaluates to,
        \begin{equation}
         \langle {0}^{3n} | U_{\rm Fid} | 0^{3n} \rangle   = \langle{\psi}|\cN(|{\psi}\rangle\langle{\psi}|)|{\psi}\rangle .
     \end{equation}
     Therefore the probability of getting all-zero state after performing a measurement in the computational basis is 
     \begin{align*}
         \mbox{Prob}(|{0}\rangle^{3n}) &= \langle{\psi}|\cN(|{\psi}\rangle\langle{\psi}|)\|{\psi}\rangle^2 = F^{4}.
     \end{align*}

\end{proof}

We now use the circuit construction outlined in Theorem~\ref{thm:fidelity} to obtain a circuit that evaluates the fidelity between any encoded state and the state after the action of noise followed by Petz recovery. Consider a noise channel $\cE$ acting on an $n$-qubit code, followed by the Petz recovery map $\cR_{P,\cE}$. Let $\{E_{i}\}$ denote the $N$ Kraus operators of the $n$-qubit noise channel. Correspondingly, let $\{R_i\}$ denote the $N$ Kraus operators of the $n$-qubit Petz recovery map. Note that both sets of Kraus operators $\{E_i\}$ and $\{R_i\}$ are of dimension $2^n \times 2^n$.

Given the Kraus operators, the individual isometric extensions  $V_{\cE}$ and  $V_{\cR_{P,\cE}}$ are known. We can then construct the isometric extension for the composite channel $(\cR_{P,\cE})\circ\cE$ by combining these individual isometric extensions as shown in Fig.~\ref{fig:G2}. Here, the unitary $V_{\cE}$ is applied only when the control block is set to $|0\rangle^{n}$.

 \begin{figure}[t]
        \centering
 \includegraphics[width = 0.7 \columnwidth]{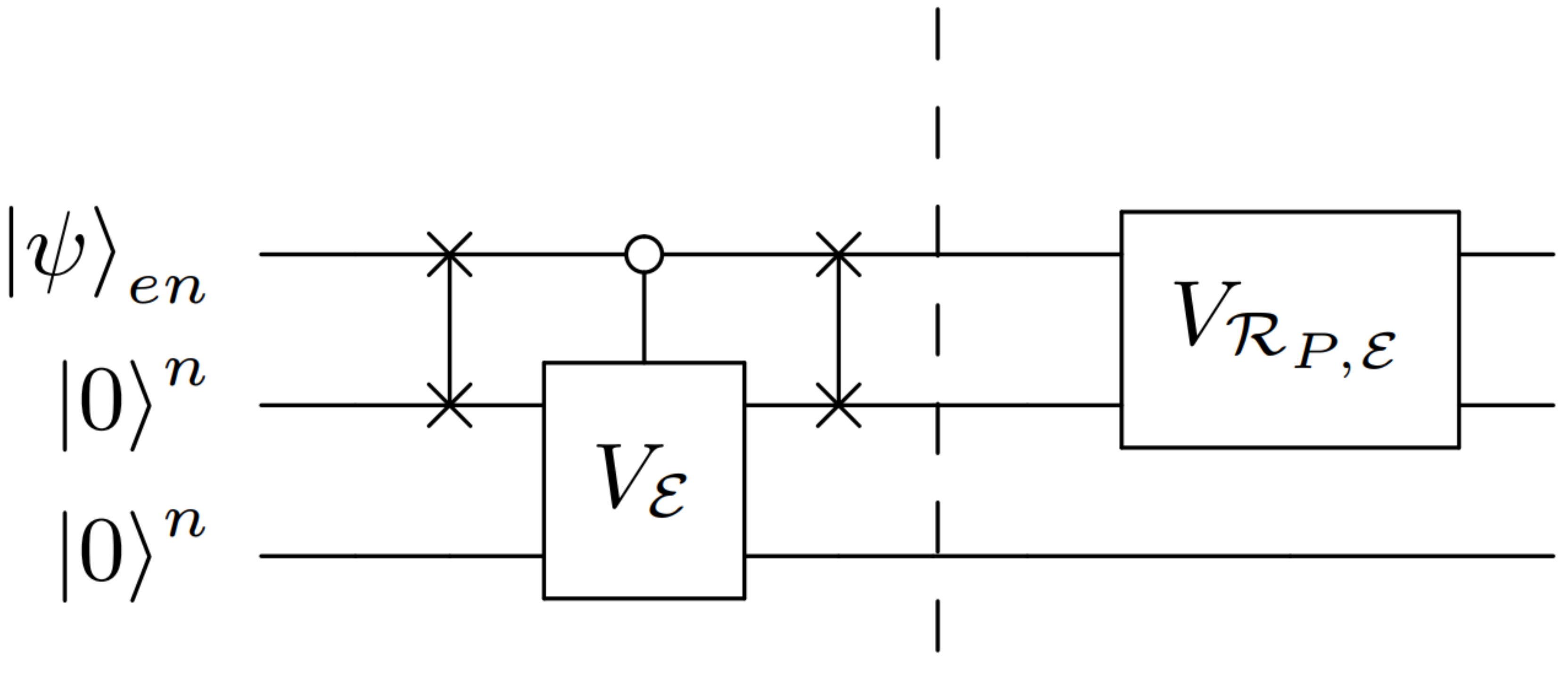}
        \caption{Isometric extension for the composite channel $(\cR_{P,\cE})\circ \cE$ using the individual isometric extensions.} 
        \label{fig:G2}
    \end{figure}

Now, let $|{\psi_{en}}\rangle$ be a state in the codespace of an $n$-qubit code, which is prepared by the action of the encoding unitary $U_{en}$ on $|{\psi}\rangle\otimes|{0}\rangle^{n-1}$. Since our circuit requires the input to be the all-zero states, we construct another unitary $\tilde{U}_{\rm en}$ such that, 
\[ \tilde{U}_{en} |{0}\rangle^{n} = |{\psi_{en}}\rangle.\]  
We demonstrate an example of such a construction of $\tilde{U}_{en}$ for a specific QEC code in Sec.\ref{sec:petz results}. Then, it follows from Thm.~\ref{thm:fidelity} that the circuit in the Fig.~\ref{fig:bl_enc_rec_1} evaluates the fidelity between the density matrix $(\cR\circ\cE)(|\psi_{\rm en}\rangle \langle\psi_{\rm en}|)$ and the initial encoded state $|\psi_{en}\rangle$, with all the inputs initialized to the zero state. The boxed part of the circuit in Fig.~\ref{fig:bl_enc_rec_1} is the purification of the density matrix $(\cR\circ\cE)(|\psi_{\rm en}\rangle \langle\psi_{\rm en}|)$.

\section{Petz map circuit via Polar Decomposition}\label{sec:petz_povm}

Our second approach to construct a circuit for the Petz map invokes general quantum measurements corresponding to positive operator valued measures (POVMs). Our approach is based on the fact that the Kraus operators of any channel can be \emph{approximately} realised via a sequence of binary outcome POVMs and unitary operations, as shown in Lemma~\ref{lem:petz_povm} below. 

The fact that quantum channels can be realised using unitary operators and POVMs is well known~\cite{lloyd2001engineering, nielsen}. The original proposal in~\cite{lloyd2001engineering} uses a $N$-outcome quantum measurement to implement a CPTP map with $N$ Kraus operators, but this implementation typically requires $\lceil \log_{2}N \rceil$ ancillary qubits~\cite{yordanov2019implementation}. Alternate approaches to implementing $N$-outcome POVMs that require only a single ancillary qubit are known, but these approaches either involve non-Hermitian quantum measurements~\cite{lloyd2001engineering} or require extending the system Hilbert space via Naimark dilation~\cite{andersson2008}. 

Here, we adapt some of these existing approaches to approximate a quantum channel via a sequence of two-outcome POVMs and unitary operations, requiring only one additional ancillary system. A similar approximate approach to simulating open quantum system dynamics has been outlined in~\cite{gupta2020optimal}, but this approach relies on Stinespring dilation rather than POVMs. Finally, we note that a related approach to implementing quantum channels using a sequence of POVMs and a single ancillary qubit has also been proposed in~\cite{shen2017}.

\begin{figure}[t]
          \centering
   \includegraphics[width = 1.0 \columnwidth]{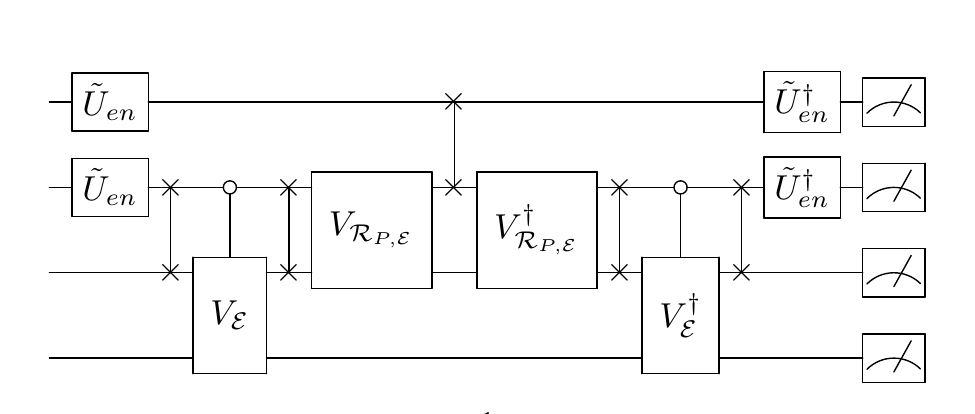}
          \caption{Circuit for the fidelity calculation}
        \label{fig:bl_enc_rec_1}
      \end{figure} 
      
\begin{lemma}\label{lem:petz_povm}
A CPTP map $\cK$ of rank $N$, described by $N$ Kraus operators $\cK \sim \sum_{i =1}^{N}\{K_{i}\}$, can be decomposed using a sequence of quantum operations $\{\cR_{j}\}$, as, 
\begin{equation}
    \label{eq:decomp}
    \cK(\rho)  = ( \cR_{N} \circ \cR_{N-1} \cdots \cR_{1} ) (\rho) + \mathcal{O}(\parallel K_{i}^{\dagger}K_{i}\parallel^{2}),
\end{equation}
where each operation $\cR_{j}$ is a composition of a $2$-outcome POVM and  a unitary gate that is controlled by the outcome of the POVM.
\end{lemma}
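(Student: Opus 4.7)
The plan is to implement each Kraus operator $K_j$ as one branch of a two-outcome measurement, and then show that composing such operations reproduces $\cK$ up to the stated error. I would begin with the polar decomposition $K_j = U_j |K_j|$ with $|K_j| := \sqrt{K_j^\dagger K_j}$ and $U_j$ unitary. Since trace preservation of $\cK$ implies $0 \leq K_j^\dagger K_j \leq I$, the pair $\{K_j^\dagger K_j,\, I - K_j^\dagger K_j\}$ is a legitimate two-outcome POVM whose canonical Kraus instruments are $\{|K_j|,\, \sqrt{I - K_j^\dagger K_j}\}$. Define $\cR_j$ to be this POVM followed by the unitary $U_j$ conditioned on the ``click'' outcome (and identity on ``no-click''). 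Its Kraus operators are then $\{K_j,\, E_j\}$ with $E_j := \sqrt{I - K_j^\dagger K_j}$, and in circuit form $\cR_j$ uses a single ancilla qubit that is measured and reset between steps, so only one extra qubit is needed for the whole sequence.

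Next, I would unfold the composition. The Kraus operators of $\cR_N \circ \cdots \circ \cR_1$ are indexed by outcome strings $\vec{s} \in \{0,1\}^N$ via
\begin{equation*}
A_{\vec{s}} = B_N^{s_N} B_{N-1}^{s_{N-1}} \cdots B_1^{s_1}, \qquad B_j^1 = K_j, \quad B_j^0 = E_j ,
\end{equation*}
and I would split the $2^N$ terms by the Hamming weight of $\vec{s}$: the all-zero branch, the $N$ single-click branches, and the multi-click branches ($\mathrm{wt}(\vec{s}) \geq 2$). Multi-click branches carry at least two factors $K_i$, so their contributions to $A_{\vec{s}}\rho A_{\vec{s}}^\dagger$ are automatically $O(\|K_i^\dagger K_i\|^2)$ in operator norm and can be absorbed into the error term.

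The heart of the argument is the operator Taylor expansion $E_i = I - \tfrac{1}{2} K_i^\dagger K_i + O(\|K_i^\dagger K_i\|^2)$, applied uniformly inside every $A_{\vec{s}}$. At leading order the single-click operator at position $j$ is $K_j$ itself, so summing the $N$ single-click branches yields $\sum_j K_j \rho K_j^\dagger = \cK(\rho)$ plus corrections. The first-order corrections from the single-click branches, together with the leading contribution of $A_{\vec{0}} = \prod_i E_i$, must then be shown to cancel up to $O(\|K_i^\dagger K_i\|^2)$. The trace-preservation identity $\sum_i K_i^\dagger K_i = I$ is the algebraic lever that makes this cancellation work.

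The main obstacle is exactly this cancellation between the no-click branch and the first-order corrections of the single-click branches, since the Taylor expansion of $\prod_i E_i$ is non-trivial owing to non-commutativity of the $K_i^\dagger K_i$'s. I would proceed order-by-order in $K_i^\dagger K_i$: write $A_{\vec{s}} = A_{\vec{s}}^{(0)} + A_{\vec{s}}^{(1)} + O(\|K^\dagger K\|^2)$, group together terms that carry a common factor $K_i^\dagger K_i$ across different branches, and collapse the resulting first-order sum using $\sum_i K_i^\dagger K_i = I$. A useful global sanity check is that CPTP preservation of $\cR_N \circ \cdots \circ \cR_1$ forces the entire map to be trace-preserving, so any uncancelled sub-quadratic residual must be traceless, consistent with its absorption into the stated $O(\|K_i^\dagger K_i\|^2)$ error.
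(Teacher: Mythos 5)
Your proposal is correct and follows essentially the same route as the paper: polar decomposition $K_j = U_j\sqrt{K_j^{\dagger}K_j}$, the two-outcome POVM $\{\sqrt{K_j^{\dagger}K_j},\,\sqrt{I-K_j^{\dagger}K_j}\}$ with a unitary conditioned on the outcome, the Taylor expansion $\sqrt{I-K_j^{\dagger}K_j}\approx I-\tfrac{1}{2}K_j^{\dagger}K_j$, and the final cancellation via $\sum_i K_i^{\dagger}K_i = I$. The only difference is bookkeeping --- you unfold the composition over outcome strings by Hamming weight while the paper tracks the no-click branch recursively --- and in both versions the first-order cancellation in fact occurs entirely within the all-zero (no-click) branch, since the corrections to the single-click branches are already second order.
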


\begin{proof}
We start with the polar decomposition of the Kraus operators. Let $\{U_{i}\}$ be a set of unitary matrices and $\{P_{i}\}$ be a set of positive semi-definite operators, such that,
\begin{equation}
\label{eq:polar}
    K_{i} = U_{i}P_{i} = U_{i}\sqrt{K_{i}^{\dagger}K_{i}}, \; \forall \; i\in[1,N].
\end{equation}

Since $\cK$ is a trace-preserving map, the set $\{P_{i}\}$ defines an $N$-outcome POVM.
\begin{equation}
    \label{eq:TP}
    \sum_{i = 1}^{N}P_{i}^2 = \sum_{i= 1}^{N}K_{i}^{\dagger}K_{i} = I.
\end{equation}

\begin{widetext}
\begin{center}
\begin{figure}[t]
\centering
\includegraphics[scale=0.29]{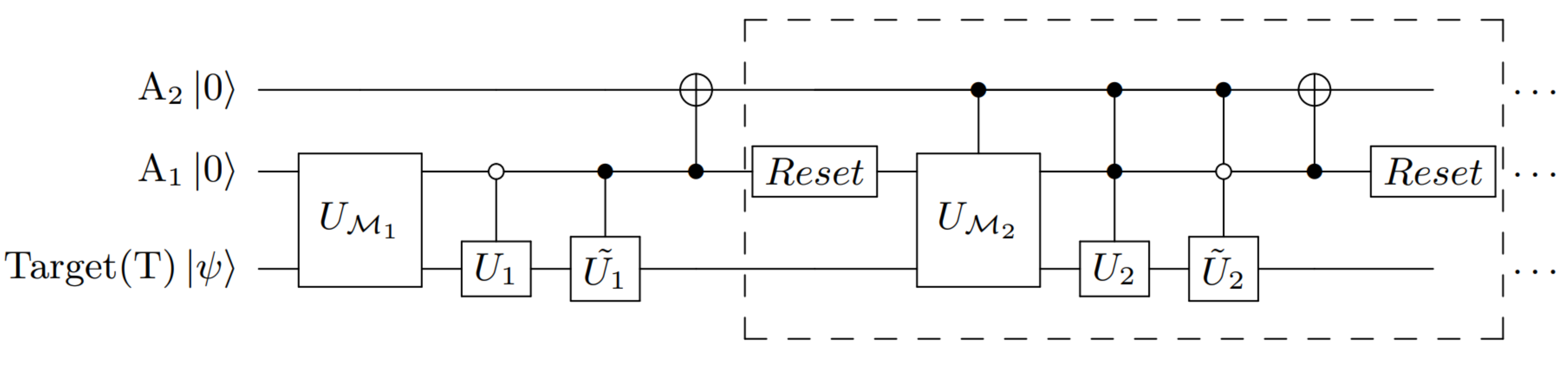}
    \caption{The first two steps of the quantum circuit used to implement any CPTP map as described in Lemma~\ref{lem:petz_povm}. The unitary gates labelled $U_{\cM_{i}}$ implement the POVMs $\cM_{i}$ at each step. The two conditional unitaries following $U_{\cM_{i}}$ implement the operations $U_i$ and $\tilde{U_i}$ as described in Eq.~\eqref{eq:polar}. The second component of the circuit (boxed) is repeated $N-2$ times in order to implement a rank $N$ CPTP map.}
    \label{fig:blockdiagram}
\end{figure}
\end{center}
\end{widetext}

We now define the sequence of operations $\{\cR_{i}\}$ as follows. We first construct $N$ two-outcome POVM measurements $\cM_{i}$, each defined by the pair of positive operators $P_{i} =\sqrt{K_{i}^{\dagger}K_{i}}$ and $Q_{i} = \sqrt{I - K^{\dagger}_{i}K_{i}}$. Suppose the positive operator $P_{i}$ corresponds to outcome $0$ and $Q_{i}$ to outcome $1$ for each $\cM_{i}$.

If the outcome of the measurement $\cM_{i}$ is $0$, we simply apply the unitary operation $U_{i}$. If the outcome of the measurement is $1$, we need to ideally perform an inverse operation $(Q_{i})^{-1}$ followed by the $(i+1)^{\rm th}$ POVM $\cM_{i+1}$. Here, we assume that the inverse operations given by $(Q_{i})^{-1}$ only \emph{weakly perturb} the state and hence can be ignored. Thus the $i^{\rm th}$ operation $\cR_{i}$ is characterized by the pair of operators 
\begin{equation}
    R^{0}_{i} = U_{i}P_{i} \equiv K_{i}, \; R^{1}_{i} = Q_{i} \equiv \sqrt{I - K_{i}^{\dagger} K_{i}}. \label{eq:Rmap}
\end{equation} 

We now formally show that the sequence of operations $\cR_{i}$ does implement the map $\cK$ upto $\mathcal{O}(\parallel K^{\dagger}_{i}K_{i} \parallel^{2})$. Note that the deviation from the exact map $\cK$ is thus typically quadratic in the noise strength, as we show numerically for a specific example, in Appendix~\ref{sec:povm_approx}. Using a Taylor series expansion and ignoring all terms of second order and above in $\parallel K^{\dagger}_{i}K_{i} \parallel$, we have, \[ \sqrt{I - K_{i}^{\dagger}K_{i}} \approx I - \frac{1}{2}K_{i}^{\dagger}K_{i}. \] 
The action of the operator $Q_{i}$ is therefore,
\begin{equation}
\label{eq:i}
    Q_{i}\rho Q_{i}^{\dagger} = \rho - \frac{1}{2}K_{i}^{\dagger}K_{i} \rho - \frac{1}{2}\rho K_{i}^{\dagger}K_{i} + \mathcal{O}(\parallel K^{\dagger}_{i}K_{i} \parallel^{2}).
\end{equation}

The action of $\cR_{j}$ on Eq.~\eqref{eq:i}, upto first order in $\parallel K_{i}^{\dagger}K_{i}\parallel$ is,
\begin{align}
\cR_{j}(Q_{i}\rho Q_{i}) &= K_{j}\rho K_{j}^{\dagger} + \rho -  \frac{1}{2}K_{i}^{\dagger}K_{i} \rho 
    - \frac{1}{2}\rho K_{i}^{\dagger}K_{i} \nonumber \\
    & -\frac{1}{2}K_{j}^{\dagger}K_{j} \rho  - \frac{1}{2}\rho K_{j}^{\dagger}K_{j} + \mathcal{O}(\parallel K_{i}K_{i}^{\dagger}\parallel^{2}). \nonumber 
\end{align}
Thus, composing the entire sequence of $N$ maps as in Eq.\eqref{eq:decomp}, we obtain the sum,
\begin{equation}
\begin{aligned}
&& (\cR_{N} \circ \cR_{N-1} \cdots \cR_{1}) (\rho) = \sum_{i =1}^{N}K_{i}\rho K_{i}^{\dagger} + \rho \\ 
    && -  \frac{1}{2}\sum_{i =1}^{N}K_{i}^{\dagger}K_{i} \rho  - \frac{1}{2}\sum_{i =1}^{N}\rho K_{i}^{\dagger}K_{i} + \mathcal{O}(\parallel K^{\dagger}_{i}K_{i} \parallel^{2}). 
\end{aligned}
\label{eq: final_step}
\end{equation}
From the trace-preserving nature of our CPTP map and Eq.\eqref{eq:TP}, we see that  Eq.~\eqref{eq: final_step} simplifies to give us the desired result. 
\end{proof}

\subsection{Quantum circuit for an arbitrary channel via polar decomposition}

We further elucidate the construction outlined in Lemma~\ref{lem:petz_povm} by describing the initial steps of the quantum circuit to realise an arbitrary quantum channel, as shown in Fig.~\ref{fig:blockdiagram}. Each step in our circuit involves implementing a  CPTP map (of rank $2$) with a pair of Kraus operators given in Eq.~\eqref{eq:Rmap}.

For the first step, we implement the operation $\cR_{1}$ whose Kraus operators are given by $K_{1}$ and $\sqrt{I-K_{1}^{\dagger}K_{1}}$. These may be implemented using a single ancilla qubit $A_1$, initialised to the $|{0}\rangle$ state, by first performing an operation $U_{\cM_{1}}$ given by, 
\begin{equation}
\label{equation:UM}
    U_{\cM_{1}} = \begin{bmatrix}
    P_{1} & -Q_{1}\\
   Q_{1} & P_{1}
    \end{bmatrix} .
\end{equation}

Since the operator acts on the combined ancilla-target system given by $\|{0}\rangle\otimes|{\psi}\rangle$ where $|{\psi}\rangle$ is the target state, it is sufficient to define only the first column of the block matrix.

This is then followed by conditionally applying either of two unitary operations -- $U_{1}$ or $\tilde{U}_{1}$ -- on the target state that corresponds to the polar decompositions of the Kraus operators $K_{1}$ and $\tilde{K_{1}} = \sqrt{I-K_{1}^{\dagger}K_{1}}$. We can see that for a rank $N=2$ CPTP map, this gives a complete implementation of the map, as follows. Consider the combined state of the ancilla-target system after the operation of $U_{\cM_{1}}$ and the corresponding unitaries. 
\begin{equation}
    |{\phi}\rangle = |{0}\rangle\otimes U_{1}\sqrt{K_{1}^{\dagger}K_{1}}|{\psi}\rangle + |{1}\rangle\otimes \tilde{U}_{1}\sqrt{\tilde{K_{1}}^{\dagger}\tilde{K_{1}}}|{\psi}\rangle.
\end{equation}
Using Eq.~\eqref{eq:polar} we see that $U_{1}\sqrt{K_{1}^{\dagger}K_{1}} = K_{1}$ and $\tilde{U}_{1}\sqrt{\tilde{K_{1}}^{\dagger}\tilde{K_{1}}} = \tilde{K_{1}}$. Therefore, we me rewrite $|{\phi}\rangle$ as
\begin{equation}
    |{\phi}\rangle = |{0}\rangle\otimes K_{1}|{\psi}\rangle + |{1}\rangle\otimes \tilde{K_{1}}|{\psi}\rangle.
\end{equation}

For channels with $N>2$, we now wish to apply the subsequent set of operations only on the part of the state corresponding to $\tilde{K_{1}}$. To achieve this, introduce a second ancilla qubit ($A_2$) and apply a \textsc{cnot} gate with $A_1$ as the control. $A_1$ can now be reinitialised to the $|{0}\rangle$ state for further operations. The state of the system ($\rho_{A_{1}}\otimes\rho_{A_{2}}\otimes\rho_{T}$) after the \textsc{cnot} and re-initialization is given as,
\begin{equation}
\begin{aligned}
    \rho_{A_{1}}\otimes\rho_{A_{2}}\otimes\rho_{T} = |{0}\rangle\langle{0}| \otimes |0\rangle\langle{0}|\otimes K_{1}|{\psi}\rangle\langle{\psi}|K_{1}^{\dagger} \\
    + |{1}\rangle\langle{1}| \otimes |{0}\rangle\langle{0}|\otimes \tilde{K_{1}}|{\psi}\rangle\langle{\psi}|\tilde{K_{1}}^{\dagger}.
\end{aligned}
\end{equation}

Now, we can use the ancilla system $A_{1}$ as a control to apply the second set of Kraus operators corresponding to $\cR_{2}$.  
We redefine the $U_{\cM_{i}}$ operators for $i>1$ as follows 
\begin{equation}
    \label{equation:UM2}
    U_{\cM_{i}} = \begin{bmatrix}
   Q_{i} & - P_{i} \\
  P_{i} & Q_{i}
    \end{bmatrix} \quad \forall i>1.
\end{equation}
Using this definition of $U_{\cM_{2}}$, the state of the combined system post the re-initialization step is given by 
 \begin{equation}
    \begin{aligned}
    \rho_{a_{2}}\otimes\rho_{a_{1}}\otimes\rho_{s} = |{0}\rangle\langle{0}| \otimes |{0}\rangle\langle{0}|\otimes K_{1}|{\psi}\rangle\langle{\psi}|K_{1}^{\dagger} \\ +  |{0}\rangle\langle{0}| \otimes |{0}\rangle\langle{0}|\otimes K_{2}\tilde{K_{1}}|{\psi}\rangle\langle{\psi}|\tilde{K_{1}}^{\dagger}K_{2}^{\dagger} \\ + 
    |{1}\rangle\langle{1} |\otimes |{0}\rangle\langle{0}|\otimes \tilde{K_{2}}\tilde{K_{1}}|{\psi}\rangle\langle{\psi}|\tilde{K_{1}}^{\dagger}\tilde{K_{2}}^{\dagger}.
    \end{aligned}
\end{equation}

Now, the subsequent steps can be carried out in the same way, and the second control ancilla is  traced over at the very end. Finally, it is now straightforward to apply this circuit construction to obtain an \emph{approximate} circuit implementation of the Petz map, as described in Sec.~\ref{sec:petz results}.

\section{Petz recovery circuit via block encoding}\label{sec:petz_qsvt}

In this section, we present an alternate approach to realizing the Petz map recovery circuit for a $n$-qubit code, using both the techniques of block encoding and isometric extension. We first visualize the code-specific Petz map as a sequence of three completely positive maps, namely, 
\begin{align}
   & \cE(P)^{-1/2}(.)\cE(P)^{-1/2} \label{eq:2}\\
    &\quad \cE^{\dagger} (.) \label{eq:3}\\
    &\quad P\,(.) P \label{eq:4}.
\end{align}

In contrast to the procedure outlined in \cite{gilyen2022_petz}, which requires the block encoding of both the maps in Eqs.~\eqref{eq:2} and \eqref{eq:4}, we only employ the block encoding technique to implement the \emph{normalization} map in Eq.~\eqref{eq:2}. In a further departure from~\cite{gilyen2022_petz} where the quantum singular value transform (QSVT) is used to implement the block encoding of the pseudo-inverse operation $\cE(P)^{-1/2}$, here, we use a more direct approach to construct the block encoding of this operator, as discussed in Appendix~\ref{sec:Appendix B}. 

The adjoint map in Eq.~\eqref{eq:3} and the action of the projection map in Eq.~\eqref{eq:4} are both realised via isometric extensions. If $\cV_{\cE}$ denotes the isometric extension of the quantum channel $\cE$, the isometric extension of the adjoint map $\cE^{\dag}$ on a state $\rho$ can be realised by the action of the unitary $(I_{2}\otimes \cV^{\dag}_{\cE})$ ( $I_{2}$ being the $2\times 2$ identity matrix) as~\cite{2011arXiv1106.1445W},
\begin{align}
    \cE^{\dag}(\rho) &= \langle{0}|(I_{2} \otimes \cV^{\dag}_{\cE}) (I_{2}\otimes \rho)(I_{2} \otimes \cV_{\cE})|{0}\rangle.
\end{align}

Finally, we realise the  projection onto the codespace via the isometric extension $U_{\cP}$ of the channel $\cP$, with Kraus operators 
\begin{equation}
    E^{(1)}_P =\begin{pmatrix}
        P &0 \\
        0 & 0
    \end{pmatrix}, \; E^{(2)}_{P} =\begin{pmatrix}
        P_{\perp}&0 \\
        0 & I_{2^{n+1}}
    \end{pmatrix}, \label{eq:channelP}
\end{equation} 
    where $P_{\perp}$ is the projector orthogonal to the codespace projector $P$. Applying the isometric extension of the channel $\cP$ allows us to post-select for the recovered state in the codespace. The above discussion is formally stated and proved in Theorem~\ref{thm:W petz unitary}.

\begin{theorem}\label{thm:W petz unitary}
   The circuit in Fig.~\ref{fig:petz_reco_state} implements the code-specific Petz recovery map with probability $\frac{1}{N^n||(\cE(\cP)^{-1/2})||^2}$ for an $n-$qubit code with projector $P$, where each qubit is subject to a noise channel $\cE$ with $N<4$ Kraus operators in an i.i.d. fashion. For a Petz channel with $N^n$ Kraus operators, the ancillary inputs are initialized to the state $(|{0}\rangle\langle{0}| \otimes \frac{I_{N^n}}{N^n})$. The final recovered state is obtained by post-selecting the all-zero state on the first $(n\log_{2}N+2)$ qubits. 
\end{theorem}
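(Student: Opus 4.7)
The plan is to trace the action of the circuit on an arbitrary codespace input $|\psi_{\rm en}\rangle$ and collect the normalization factors coming from each of the three post-selections dictated by the CP decomposition of $\cR_{P,\cE}$ in Eqs.~\eqref{eq:2}--\eqref{eq:4}. My strategy is to handle the three constituent maps separately: (a) block-encode the normalization map $\cE(P)^{-1/2}(\cdot)\cE(P)^{-1/2}$, (b) realize the adjoint map $\cE^{\dag}(\cdot)$ from the isometric extension $V_{\cE}$ together with a maximally-mixed environment register, and (c) realize the projection $P(\cdot)P$ using the isometric extension of the channel $\cP$ in Eq.~\eqref{eq:channelP}. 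These stages are then composed in the order induced by Eqs.~\eqref{eq:2}--\eqref{eq:4}.

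For the normalization stage I would invoke the $(\alpha,a,0)$-block encoding of $\cE(P)^{-1/2}$ constructed in Appendix~\ref{sec:Appendix B}, with $\alpha = \parallel \cE(P)^{-1/2} \parallel$. By Lemma~\ref{lem:a} and the construction depicted in Fig.~\ref{fig:blenc_Agen}, a successful measurement of the $a$-qubit ancilla in $|0\rangle^{\otimes a}$ sends the data register to $\cE(P)^{-1/2}|\psi_{\rm en}\rangle/\alpha$, contributing an overall factor $1/\alpha^{2} = 1/\parallel \cE(P)^{-1/2}\parallel^{2}$ to the eventual success probability. For the adjoint stage I would exploit the identity $\cE^{\dag}(\rho) = \langle 0|(I\otimes V_{\cE}^{\dag})(I\otimes \rho)(I\otimes V_{\cE})|0\rangle$ stated in the text: at the circuit level this amounts to initializing the environment in the maximally mixed state $I_{N^{n}}/N^{n}$, applying $V_{\cE}^{\dag}$, and post-selecting the environment on all-zero. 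A direct matrix-element calculation using $V_{\cE}|\phi\rangle|0\rangle_{E} = \sum_i E_i|\phi\rangle|i\rangle_{E}$ shows that the post-selected branch carries $\frac{1}{N^{n}}\cE^{\dag}(\cdot)$, producing the remaining $1/N^{n}$ factor in the claimed probability. Finally, the codespace projection is implemented by the isometric extension of $\cP$ from Eq.~\eqref{eq:channelP}, whose successful branch (ancilla $|0\rangle$) leaves exactly $P(\cdot)P$ on the data register.

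Composing the three stages, the unnormalized state heralded by the $(n\log_{2}N+2)$ ancillary qubits all reading zero is proportional to $\cR_{P,\cE}(|\psi_{\rm en}\rangle\langle\psi_{\rm en}|)$, with proportionality constant $1/(N^{n}\parallel \cE(P)^{-1/2}\parallel^{2})$; normalizing yields the stated Petz output, and trace preservation of the (normalized) Petz channel then pins down the success probability to the claimed value. The main obstacle I anticipate is the bookkeeping: one must verify that (i) the ancillas introduced across the three stages exactly match the $(n\log_{2}N+2)$ qubits on which the theorem requires the all-zero post-selection, (ii) the initialization $|0\rangle\langle 0|\otimes I_{N^{n}}/N^{n}$ is correctly distributed between the block-encoding ancilla and the environment register of $V_{\cE}$, and (iii) the heralding probabilities from different stages factor cleanly so that the compact expression $1/(N^{n}\parallel \cE(P)^{-1/2}\parallel^{2})$ emerges independently of the input codespace state, as the statement asserts. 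The restriction $N<4$ enters precisely at point~(i), ensuring that the environment ancilla fits into $n\log_{2}N$ qubits that combine with the two additional ancillas (one for the block encoding, one for the projection channel) to give the stated qubit count.
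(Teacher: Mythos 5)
Your proposal is correct and follows essentially the same route as the paper: the same three-map decomposition of $\cR_{P,\cE}$, the same sub-normalized block encoding of $\cE(P)^{-1/2}$ contributing the factor $1/\parallel\cE(P)^{-1/2}\parallel^{2}$, the same realization of the adjoint via $\cV_{\cE}^{\dag}$ acting on a maximally mixed environment (contributing $1/N^{n}$), the same isometric extension of the channel $\cP$ for the final projection, and the same use of trace preservation to fix the success probability. The one bookkeeping item you flag but leave open --- making the block encoding of $\cE(P)^{-1/2}$ act only on the system register while the $N^{n}$-dimensional environment register is present --- is resolved in the paper by sandwiching $U^{\cE(P)^{-1/2}}$ between \textsc{swap} gates to obtain a block encoding of the block-diagonal operator $(\cE(P)^{-1/2})_{D}$, as stated in Lemma~\ref{lem: UAd_proof}.
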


\begin{figure}[t]
     \centering
  \includegraphics[scale=0.7]{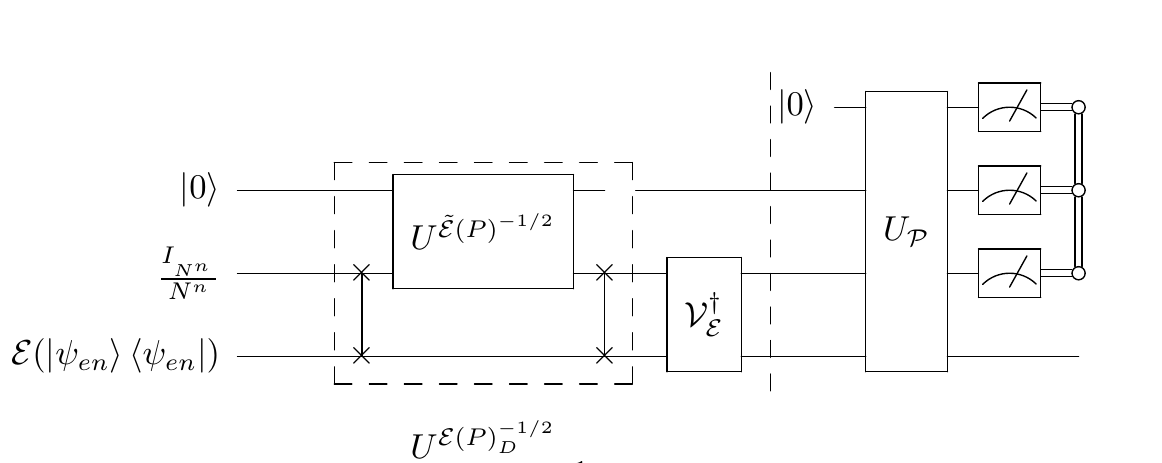}
         \caption{Petz recovery circuit for noise channel $\cE$, using block encoding of $\cE(P)^{-1/2}$ with isometric extensions $V_{\cE}$ and $U_{P}$. $\cE(|\psi_{\rm en}\rangle\langle\psi_{\rm en}|)$ is the noisy, encoded state.} 
         \label{fig:petz_reco_state}
    
     \end{figure}
     
\begin{proof}
 For a $n-$qubit code and an error channel with $N$ Kraus operators, the Petz recovery map $\cR_{P,\cE}$ has $N^n$ Kraus operators. Let $W$ denote the unitary operation implemented by the first part of the circuit in Fig.~\ref{fig:petz_reco_state}, up to the dotted line. 

 \begin{widetext}
\begin{center}
     \begin{figure}[t]
     \centering                                    
\includegraphics[scale=0.77]{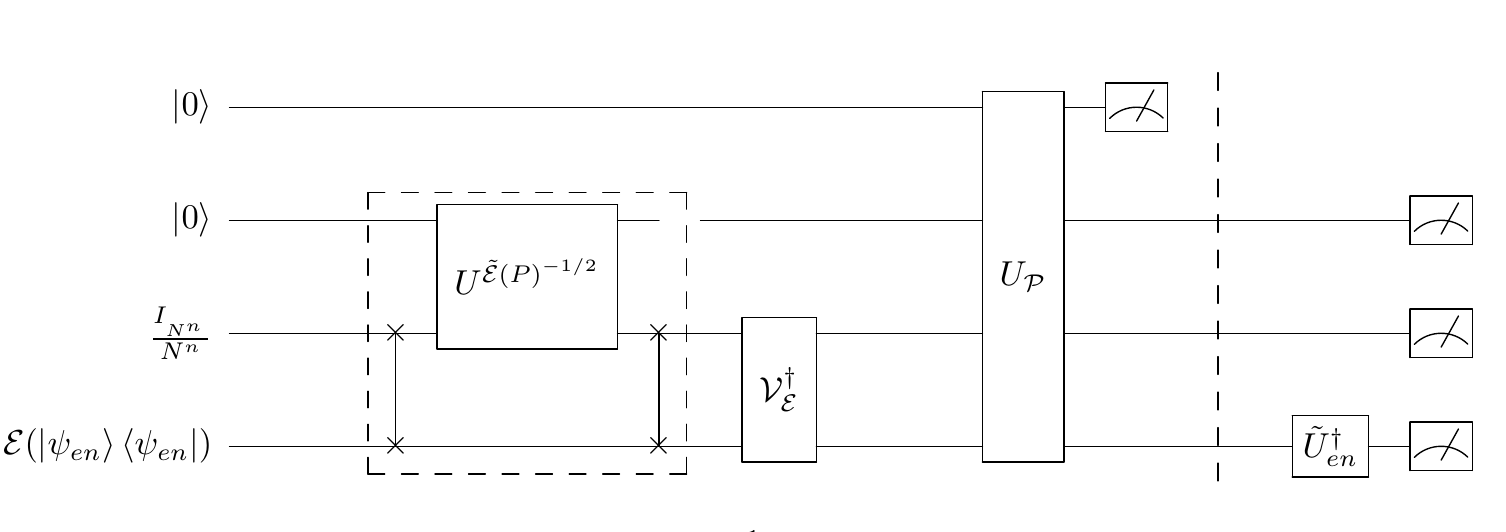}
\caption{Petz recovery circuit followed by the circuit for estimating the fidelity between the encoded state and the recovered state. The third register is an $(n\log_2N)$-qubit state for an $n$-qubit state and a noise channel $\cE$ with $N$ Kraus operators.}
         \label{fig:petz_reco_fid}
    
\end{figure}
\end{center}
\end{widetext}

 We first note that the unitary $W$ has the following algebraic form.

\begin{multline}
    W = (I_{2}\otimes SWAP) (U^{\Tilde{\cE}(P)^{-1/2}} \otimes I_{N^n}) (I_{2}\otimes SWAP)\\
               (I_{2} \otimes \cV_{\cE}^{\dag}) \label{eq:W_matrix}.
\end{multline}
where, $U^{\Tilde{\cE}(P)^{-1/2}}$ denotes the block encoding of the operator $\frac{\cE(\cP)^{-1/2}}{||\cE(\cP)^{-1/2}||}$ and $\cV_{\cE}$ denotes the isometric extension of the noise channel $\cE$. $I_{N^{n}}$ denotes the $N^{n} \times N^{n}$ identity matrix.

Next, we note that if $U^{A}$ gives the block encoding of an operator $A$, then sandwiching $U^A$ with a pair of \textsc{swap} operations effectively gives a block encoding of the block diagonal operator $A_D$, written as, 
\begin{align*}
    A_D & = \begin{pmatrix}
              A & 0 & \hdots &0\\
              0 & A & \hdots & 0\\
              0 & 0 & \ddots & 0\\
              0 & 0 &\hdots & A 
    \end{pmatrix}.
\end{align*} 
We refer to Lemma~\ref{lem: UAd_proof} in Appendix~\ref{sec:Appendix B} for a proof. 

Thus, the sequence of gates $(I_{2}\otimes SWAP) (U^{\Tilde{\cE}(P)^{-1/2}} \otimes I_{N^n}) (I_{2}\otimes SWAP)$ in Eq.~\eqref{eq:W_matrix} gives a block encoding of the block diagonal operator $(\Tilde{\cE}(P)^{-1/2})_D$. Note that the number of diagonal blocks in the matrix $(\Tilde{\cE}(P)^{-1/2})_D$ depends on the number of Kraus operators characterizing the Petz map. For an $n$-qubit code, there exist $N^n$ Kraus operators, each of dimension $2^n \times 2^n$. Thus, $(\Tilde{\cE}(P)^{-1/2})_D$ contain $N^n$ diagonal blocks of the same dimension.

The full sequence of operations in Eq.~\eqref{eq:W_matrix} thus corresponds to sequentially applying the unitaries  $U^{\Tilde{\cE}(P)^{-1/2}_D}$ and $(I_{2}\otimes \cV^{\dag}_{\cE})$. Multiplying these unitaries, we then obtain 
the elements of $i^{\rm th}$ block of the first row of the unitary $W$ as,
\begin{align}
 \frac{ E_i^{\dag} \cE(P)^{-1/2} }{||\cE(P)^{-1/2}||} .
\end{align}

It follows that the action of the unitary operation $W $ on the state $(|{0}\rangle\langle{0}| \otimes \frac{I_{N^n}}{N^n} \otimes \cE(|\psi_{\rm en}\rangle\langle{\psi_{\rm en}}|))$, gives, 
\begin{align}\label{eq:w_in}
  W \left(|{0}\rangle\langle{0}| \otimes \frac{I_{N^n}}{N^n} \otimes \cE(|{\psi_{\rm en}}\rangle\langle{\psi_{\rm en}}|) \right)W^{\dag} \mbox{~~~~~~~~~~~~~}\nonumber\\
  = |{0}\rangle\langle{0}|^{2n+1}  \otimes \Bigg(\frac{(\cE^{\dag}\circ\cE(P)^{-1/2} \circ\cE)(|{\psi_{\rm en}}\rangle\langle{\psi_{\rm en}}|) }{N^n  \,\,||(\cE(\cP)^{-1/2})||^2}\Bigg)\nonumber\\ +\hdots\mbox{~~~~~~~}
\end{align}
The final step (beyond the dotted line in Fig.~\ref{fig:petz_reco_state}) is simply a projection onto the codespace. To implement this, the isometric extension $U_{\cP}$ of the channel $\cP$ defined in Eq.~\eqref{eq:channelP} is applied to the state 
\[ |{0}\rangle\langle{0}|\otimes  W \left(|{0}\rangle\langle{0}| \otimes \frac{I_{N^n}}{N^n} \otimes \cE(|{\psi_{\rm en}}\rangle\langle{\psi_{\rm en}}|) \right)W^{\dag},\] 
evaluated in Eq.\eqref{eq:w_in}. The action of $U_{\cP}$ leads to,
\begin{multline}\label{eq:Ucp_out}
U_{\cP} ( I_2 \otimes W) \Bigg(|{0}\rangle\langle{0}|^{\otimes 2} \otimes \frac{I_{N^n}}{N^n} \otimes \cE(|{\psi_{\rm en}}\rangle\langle{\psi_{\rm en}}|) \Bigg)\\(I_2 \otimes W^{\dag}) U_{\cP}^{\dag}\\
       = |{0}\rangle\langle{0}\otimes|{0}\rangle\langle{0}|^{2n+1} | \otimes \Bigg(\frac{(\cR_{P,\cE}\circ\cE)(|{\psi_{\rm en}}\rangle\langle{\psi_{\rm en}}) }{N^n  \,\,||(\cE(\cP)^{-1/2})||^2} \Bigg) +\\ |{1}\rangle\langle{1}\otimes |{0}\rangle\langle{0}^{2n+1}| \otimes P_{\perp}(*)P_{\perp}.
\end{multline}
Therefore, upon measuring the first $(n\log_{2} N+2)$ qubits -- which constitute the ancilla qubits for the circuit shown in Fig.~\ref{fig:petz_reco_state} -- in the computational basis and keeping only the all-zero outcome gives the recovered state as $\frac{(\cR_{P,\cE}\circ\cE)(|{\psi_{\rm en}}\rangle\langle{\psi_{\rm en}}|) }{N^n  \,\,||(\cE(\cP)^{-1/2})||^2}$. It is also easy to see from Eq.~\eqref{eq:Ucp_out} that this post-selection on the $|{0}\rangle^{\otimes (n\log_{2}N+2)}$ state succeeds with the probability,
\begin{align}
  \mbox{Prob}(0)&= \frac{\tr(\cR_{P,\cE}\circ\cE)(|{\psi_{\rm en}}\rangle\langle{\psi_{\rm en}}|)}{N^n||(\cE(\cP)^{-1/2})||^2}\nonumber\\
   &=\frac{1}{N^n||(\cE(\cP)^{-1/2})||^2} . 
\end{align}
On the other hand, the success probability of the \emph{QSVT} based approach is $\frac{1}{16 N^n||(\cE(\cP)^{-1})||}$, which is 16 times smaller than the success probability of our approach.
\end{proof}

As a simple corollary to the above result, we show how we can also estimate the fidelity between the recovered state $\cR_{P, \cE}\circ\cE (|{\psi_{\rm en}}\rangle\langle{\psi_{\rm en}}|)$ and the encoded state $|{\psi_{\rm en}}\rangle$ using our approach.

\begin{corollary}\label{thm:W petz unitary_fidelity}
The circuit in Fig.~\ref{fig:petz_reco_fid} gives the fidelity between the encoded state $|{\psi_{en}}\rangle$ and the recovered state $(\cR_{P,\cE}\circ\cE)(|{\psi_{en}\rangle\langle{_{en}}}|)$ as the probability of getting the all-zero state. The final step (shown beyond the dashed line) involves the unitary $\tilde{U}_{\rm en}$ which transforms $|{0}\rangle^{\otimes n}$ to the encoded state $|{\psi_{en}}\rangle$, followed by a measurement in the computational basis. 
\end{corollary}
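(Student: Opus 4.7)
The plan is to recognize Figure~\ref{fig:petz_reco_fid} as a combination of the Petz recovery circuit from Theorem~\ref{thm:W petz unitary} with the overlap-estimation primitive from Theorem~\ref{thm:fidelity}. Once this identification is made, the corollary reduces to a direct matrix-element computation using both earlier results.

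The first step is to invoke Theorem~\ref{thm:W petz unitary} as providing exactly the purification needed here. The subcircuit comprising the encoding $\tilde{U}_{\rm en}$, the noise isometry $\cV_\cE$, the unitary $(I_2\otimes W)$, and the projection isometry $U_\cP$ produces a joint state whose component in the all-zero sector of the $(n\log_2 N + 2)$ ancillary qubits is proportional to the recovered density operator $(\cR_{P,\cE}\circ\cE)(|\psi_{\rm en}\rangle\langle\psi_{\rm en}|)$ on the system register. In other words, after this stage, post-selecting those ancillae to $|0\rangle$ yields a scaled copy of the recovered state --- exactly the role played by the purification $G$ of Lemma~\ref{lem:a} in the proof of Theorem~\ref{thm:fidelity}.

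The second step is to execute the rotation trick from Theorem~\ref{thm:fidelity}. Applying $\tilde{U}_{\rm en}^{\dagger}$ to the system register converts $|\psi_{\rm en}\rangle$ into $|0\rangle^{\otimes n}$, so that the all-zero matrix element of the full circuit picks out $\langle\psi_{\rm en}|(\cR_{P,\cE}\circ\cE)(|\psi_{\rm en}\rangle\langle\psi_{\rm en}|)|\psi_{\rm en}\rangle$, which is precisely $F^{2}$. Combined with the normalization inherited from the accept branch identified in Theorem~\ref{thm:W petz unitary}, the joint probability of observing the all-zero outcome across every register is then directly proportional to the fidelity, while the reject branch (supported on $|1\rangle$ of the first ancilla) contributes nothing by orthogonality.

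The main obstacle is the normalization bookkeeping rather than any new technical ingredient: one has to be careful about the factor $1/(N^n\|\cE(P)^{-1/2}\|^{2})$ inherited from the probabilistic recovery in Theorem~\ref{thm:W petz unitary}, and about whether the resulting expression carries one power of $F^{2}$ or two, depending on whether the intermediate state in Figure~\ref{fig:petz_reco_fid} is effectively pure or mixed (unlike the purely unitary setting of Theorem~\ref{thm:fidelity}, where the all-zero probability is $F^{4}$). Apart from this careful accounting, the argument is a direct corollary of Theorems~\ref{thm:W petz unitary} and~\ref{thm:fidelity} and introduces no new circuit primitives.
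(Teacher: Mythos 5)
Your proposal follows the paper's proof of this corollary essentially verbatim: the portion of the circuit up to the dashed line is identified with Fig.~\ref{fig:petz_reco_state} and handled by Theorem~\ref{thm:W petz unitary}, after which the final $\tilde{U}_{\rm en}^{\dagger}$ rotation turns the all-zero outcome probability into the diagonal matrix element $\langle\psi_{\rm en}|(\cR_{P,\cE}\circ\cE)(|\psi_{\rm en}\rangle\langle\psi_{\rm en}|)|\psi_{\rm en}\rangle$. On the one point you leave open, the paper's computation in Eq.~\eqref{eq:UW_state} resolves it: the outcome carries a single power of $F^{2}$, scaled by the success-probability factor $1/\bigl(N^{n}\|\cE(P)^{-1/2}\|^{2}\bigr)$, because here the fidelity appears as the weight of a component of a mixed state rather than as a squared amplitude of a pure state as in the $F^{4}$ result of Theorem~\ref{thm:fidelity}.
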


\begin{figure}[t]
    \centering
    \includegraphics[width = 0.65 \columnwidth]{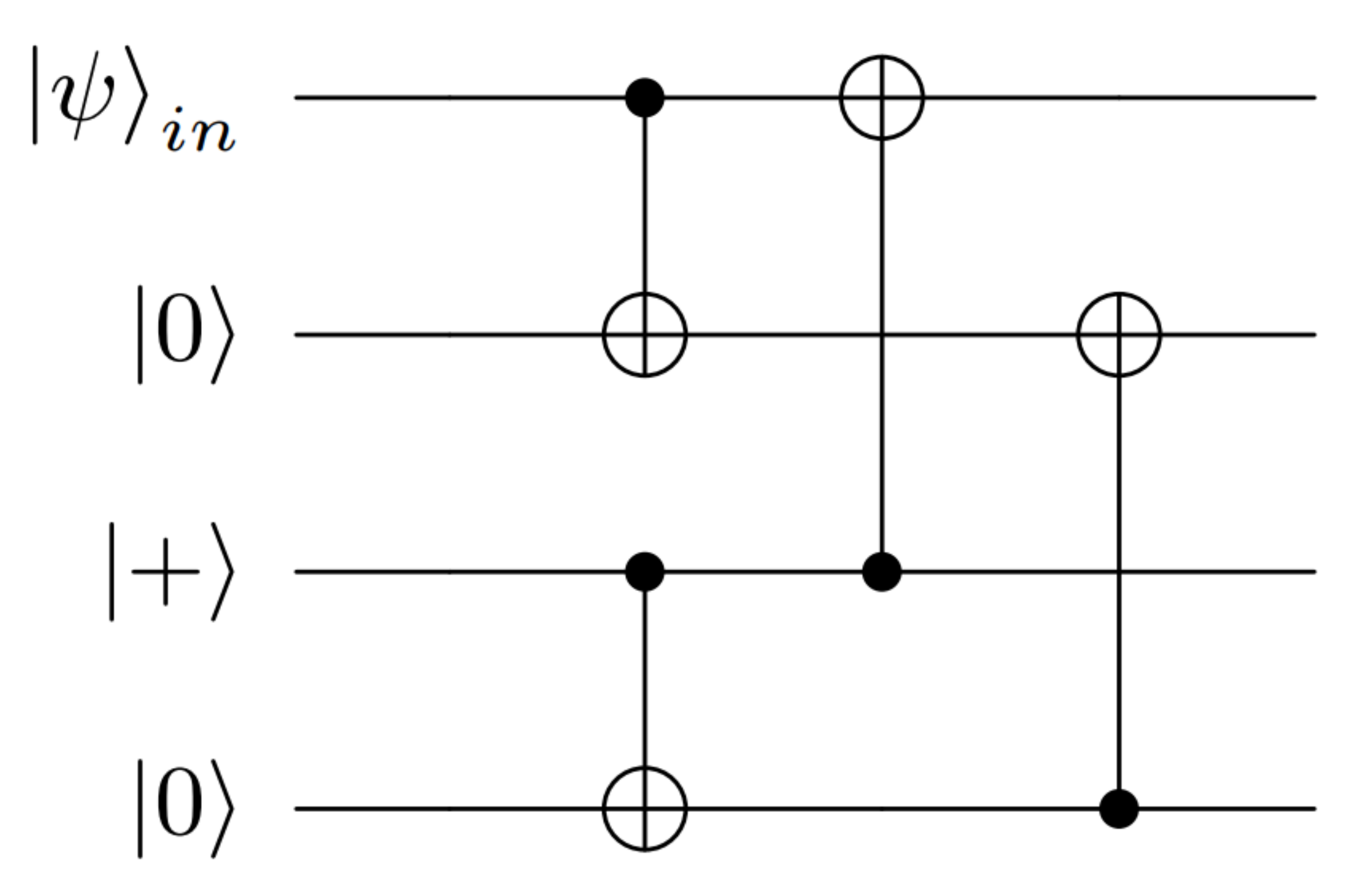}
    \caption{Encoding circuit for the $[[4,1]]$ code.} 
    \label{fig:circ_enco}
\end{figure}

\begin{proof}
Note that the first part (until the dashed line) of the circuit in Fig.~\ref{fig:petz_reco_fid} is the same as that of the circuit in Fig.~\ref{fig:petz_reco_state}. To analyze the last part of the circuit (beyond the dashed line), recall that $\tilde{U}_{\rm en}$ is such that $\tilde{U}_{\rm en}|{0}\rangle^{\otimes n} = |\psi_{\rm en}\rangle$.  Thus, the action of the unitary operator $(I_{2}\otimes I_{N^n}\otimes \Tilde{U}_{en}) U_{\cP} ( I_2 \otimes W) $ on the state $\left(|{0}\rangle\langle{0}| \otimes \frac{I_{N^n}}{N^n} \otimes \cE(|{\psi_{\rm en}}\rangle\langle{\psi_{\rm en}}|)\right)$, results in the following state 
   
    \begin{align}
     &  \left( \frac{\langle{\psi_{en}}(\cR_{P,\cE}\circ\cE)(|{\psi_{en}}\rangle\langle{\psi_{en}}|)|{\psi_{en}}\rangle }{N^n \, ||\cE(\cP)^{-1/2})||^2} \right) |{0}\rangle\langle{0}|^{2n+1} \nonumber \\
     & \qquad\qquad \qquad\qquad\qquad\qquad +  \hdots \label{eq:UW_state} 
    \end{align}
Measuring the state in Eq.\eqref{eq:UW_state} in the computational basis, we see that the probability of obtaining the all-zero state gives the desired fidelity. 
\end{proof}
Finally, in terms of resource requirements, we note that while the QSVT procedure in \cite{gilyen2022_petz} requires $2(n\log_2N+2)$ ancilla qubits, our approach requires only  $(n\log_{2} N +2)$ ancilla qubits. On the other hand, the direct implementation of the Petz map by its isometric extension costs only $n$ ancilla qubits.

\begin{figure}[t]
    \centering
      \includegraphics[width = 0.65 \columnwidth]{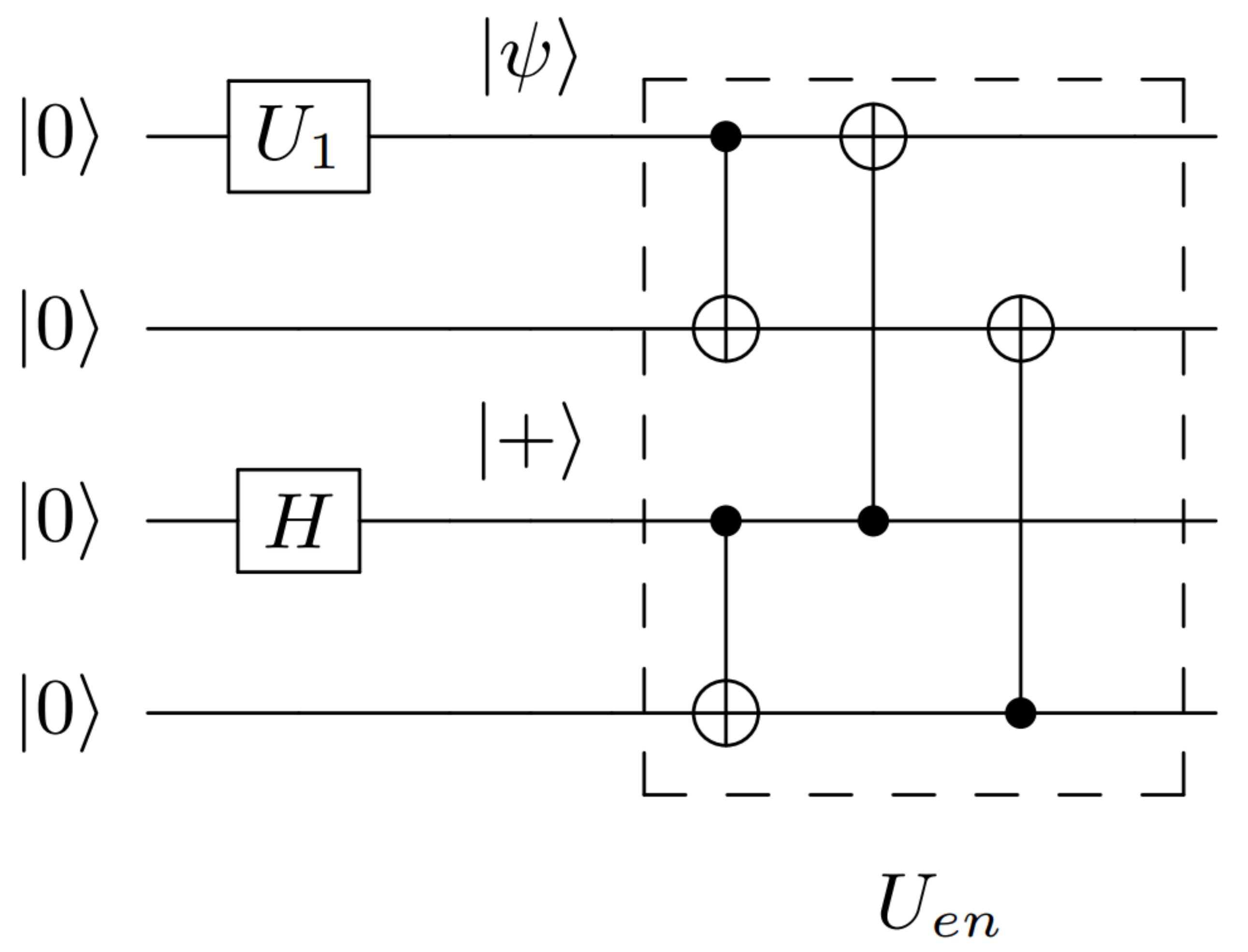}
    \caption{Circuit for the $\Tilde{U}_{\rm en}$ for the $[[4,1]]$ code with encoding unitary $U_{\rm en}$. $U_1$ is the single qubit unitary that prepares an arbitrary single-qubit state $|{\psi}\rangle=a|{0}\rangle+b |{1}\rangle$ from $|{0}\rangle$}.
    \label{fig:Utilde}
\end{figure}

\section{Petz recovery circuits for amplitude-damping noise}\label{sec:petz results}

In this section, we provide concrete examples of our constructions for the case of the $4$-qubit code~\cite{leung} subject to amplitude-damping noise. The amplitude-damping channel models energy dissipation in a quantum systems~\cite{nielsen} and is known to be one of the dominant noise processes in several physical realizations of qubits~\cite{chirolli2008}, making this as a natural candidate for our study.

A single qubit amplitude-damping channel is represented by a pair of Kraus operators, given by,
 \begin{align}
     A_0 = \begin{pmatrix}1 & 0\\
                        0 & \sqrt{1-\gamma}\end{pmatrix} \qquad  A_1 = \begin{pmatrix}0 & \sqrt{\gamma}\\
                        0 & 0\end{pmatrix} ,
 \end{align}
where $\gamma$ is the probability for the system to decay from the (excited) state $|{1}\rangle$ to the (ground) state $|{0}\rangle$. In our simulations, we implement amplitude-damping noise in two different ways. First, we use the circuit model for the amplitude-damping channel discussed in~\cite{nielsen}. Specifically, the isometric extension of a single-qubit amplitude damping channel can be implemented via a controlled rotation operator followed by a \textsc{cnot}, with an ancilla that is initialised to the fixed state $|0\rangle$~\cite{nielsen}. Alternatively, we can also realise amplitude-damping noise using a sequence of identity gates, where the duration and number of gates are decided based on the $T_1$, $T_2$  times of the specific qubits, as described in Appendix~\ref{sec:sim_AD}.

The $4$-qubit code proposed by Leung et al. \cite{leung}, is one of the earliest examples of a channel-adapted quantum error correcting code. It encodes a single qubit into the two-dimensional subspace spanned by,
  \begin{align}\label{eq:leung_code}
      |{0}\rangle_L &= \frac{1}{\sqrt{2}}\left( |{0000}\rangle+|{1111}\rangle \right) \nonumber \\
      |{1}\rangle_L &= \frac{1}{\sqrt{2}}\left( |{0011}\rangle+|{1100}\rangle \right),
  \end{align}
and was shown to correct for amplitude-damping noise with fidelity $1-5 O(\gamma^{2})$.  Subsequently, it was shown that the code-specific Petz recovery map tailored to this $4$-qubit code and amplitude-damping noise improves upon this fidelity bound~\cite{hkn_pm2010}. 

In our implementation, we prepare the encoded states corresponding to the codewords in Eq.~\eqref{eq:leung_code} using the circuit in Fig.~\ref{fig:circ_enco}. Note that in place of the original encoding circuit in~\cite{leung}, we have used a modified form proposed in~\cite{jayashankar2022adaptive}, so as to obtain a simpler circuit for the encoding unitary $U_{\rm en}$. To calculate the fidelity using the circuits in Figs.\ref{fig:bl_enc_rec_1} and~\ref{fig:petz_reco_fid}, we also need a circuit implementation of the unitary $\Tilde{U}_{en}$. The unitary $\Tilde{U}_{en}$, can be constructed once we know the encoding unitary $U_{en}$, as illustrated in Fig.~\ref{fig:Utilde}, for the case of the $4$-qubit code.

\subsection{Results}\label{sec:plots}

We now present the results based on our implementations of the Petz recovery circuits for the amplitude-damping channel and the four-qubit code on a noisy quantum simulator. Our simulations were carried out on the qiskit platform of IBMQ, simulating noisy superconducting qubits. 

\begin{figure}[t]
\centering
\includegraphics[scale=0.23]{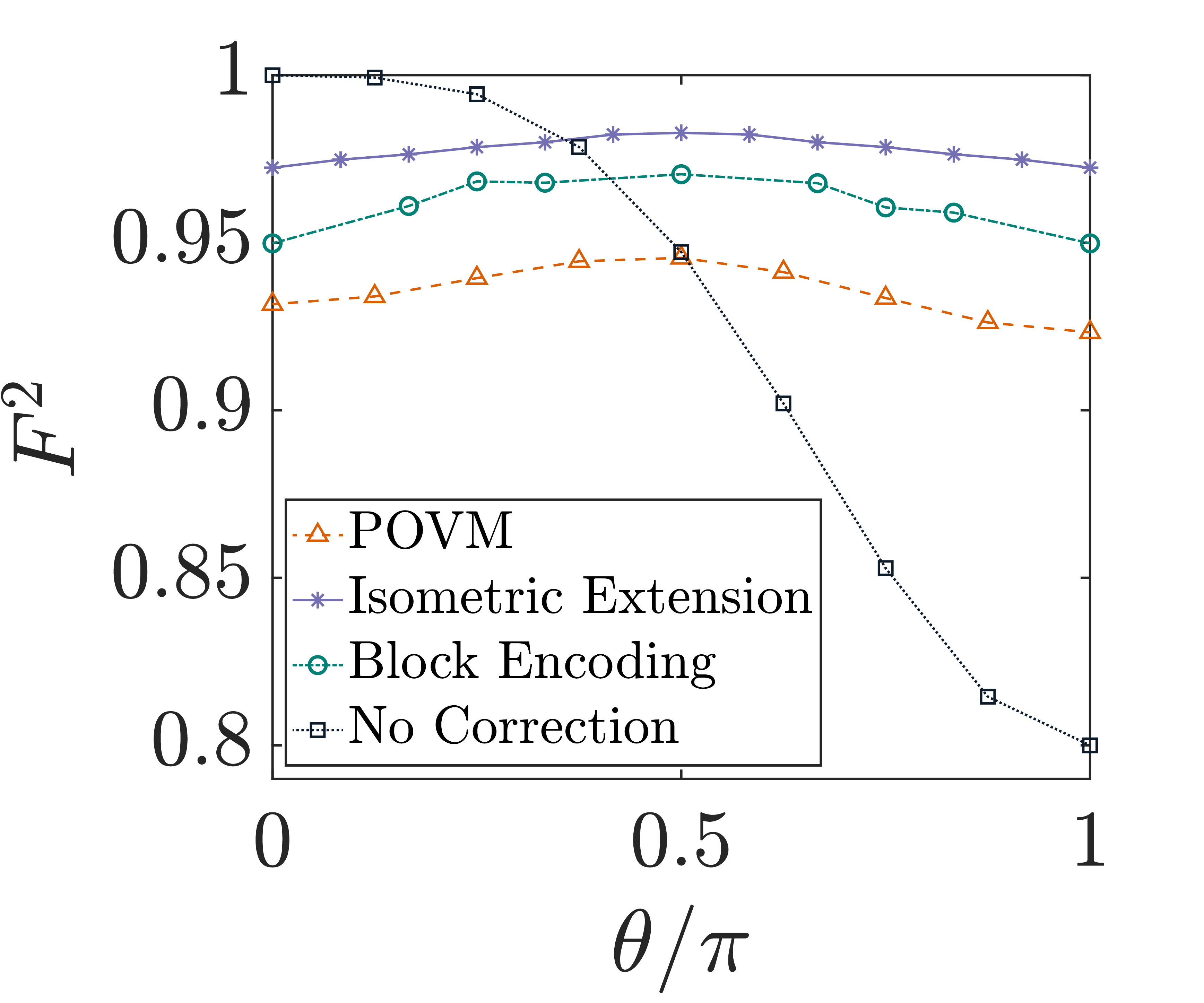}
\caption{Fidelity ($F^2$) for various states encoded as $|{\psi}\rangle = \cos\left(\frac{\theta}{2}\right)|{0}\rangle + \sin\left(\frac{\theta}{2}\right)|{1}\rangle$ in the $4$-qubit code and amplitude-damping noise with damping parameter $\gamma = 0.2$, implemented on the qiskit platform.}
\label{fig:fid_state}
\end{figure}

We first plot the fidelities obtained for different input states for a fixed noise strength, corresponding to a value of the damping parameter $\gamma=0.2$, in Fig.~\ref{fig:fid_state}. Here, we have used the circuit model to introduce amplitude-damping noise for the POVM-based and isometric extension-based Petz circuits. For the Petz circuit using the block encoding approach, we followed the method outlined in Appendix \ref{sec:sim_AD} to introduce the noise. Furthermore, rather than use the block encoding-based circuit in Fig.~\ref{fig:petz_reco_state}, which yields the recovered state with a certain probability, we instead use the circuit in Fig.~\ref{fig:petz_reco_fid} that directly estimates the fidelity of the recovered state. 

Here, we note further challenges in implementing the block encoding-based Petz recovery circuits described in Sec.~\ref{sec:petz_qsvt}, namely, preparing the ancillas in the maximally mixed state. For the specific case of the $4$-qubit code subject to amplitude-damping noise, we require the ancillary qubits to be prepared in the maximally mixed state $\frac{I_{16\times 16}}{2^4}$. Our approach to preparing the maximally mixed state makes use of the noise channel itself. A single qubit amplitude-damping channel with noise parameter $\gamma =0.5$ acting on the state $|{1}\rangle$ results in the maximally mixed state $\frac{I_{2\times2}}{2}$. So for the $4$-qubit maximally mixed state, we simply apply the $4-$qubit amplitude-damping channel to the state $|{1}\rangle^{\otimes 4}$. On the IBMQ quantum simulator, we again follow the procedure outlined the Appendix~\ref{sec:sim_AD} to realize the maximally mixed state $\frac{I_{16\times 16}}{2^4}$.

\begin{figure}[t]
\centering
\includegraphics[scale=0.24]{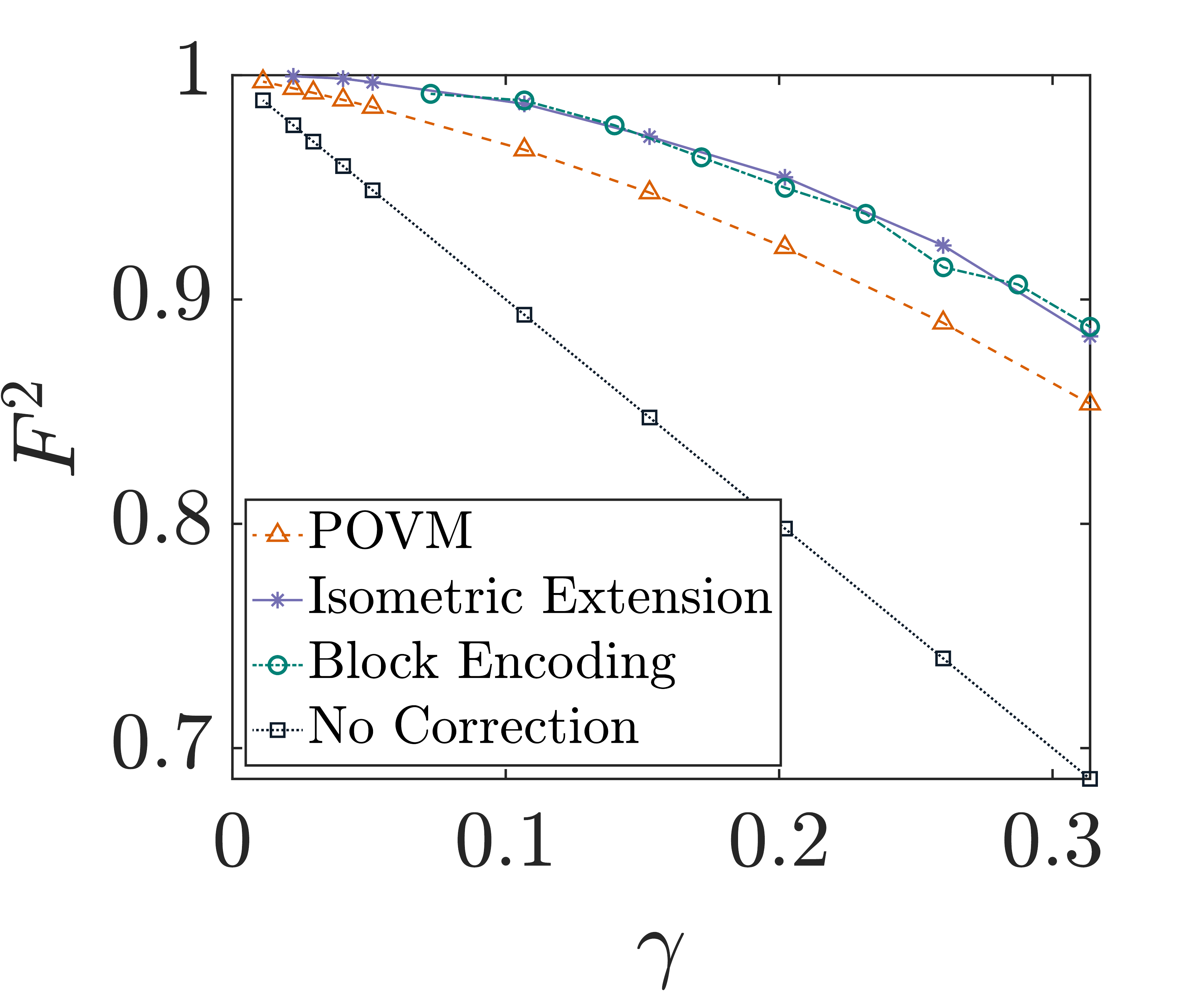}
\caption{Fidelity ($F^2$)  of $|{1}\rangle$ plotted against the damping parameter $\gamma$ for the $4$-qubit code under amplitude-damping noise, simulated using a sequence of identity gates as explained in Appendix~\ref{sec:sim_AD}.}
\label{fig:gamma_vs_fid_ket_1}
\end{figure}

Next, we plot the fidelities for a fixed state, namely, the $|{1}\rangle$ state, for various values of the damping parameter $\gamma$ in Fig.~\ref{fig:gamma_vs_fid_ket_1}. Here we have introduced the noise using the sequence of identity gates as explained in Appendix ~\ref{sec:sim_AD}, for all three Petz circuits.
In this context, it is important to note that the fidelity for the POVM method is slightly lower than the other two constructions that use the isometric extension and sequential block encoding, since the former is an \emph{approximate} realisation of the Petz recovery, rather than an exact implementation.  

\begin{figure}[ht!]
    \includegraphics[width = 0.8 \columnwidth]{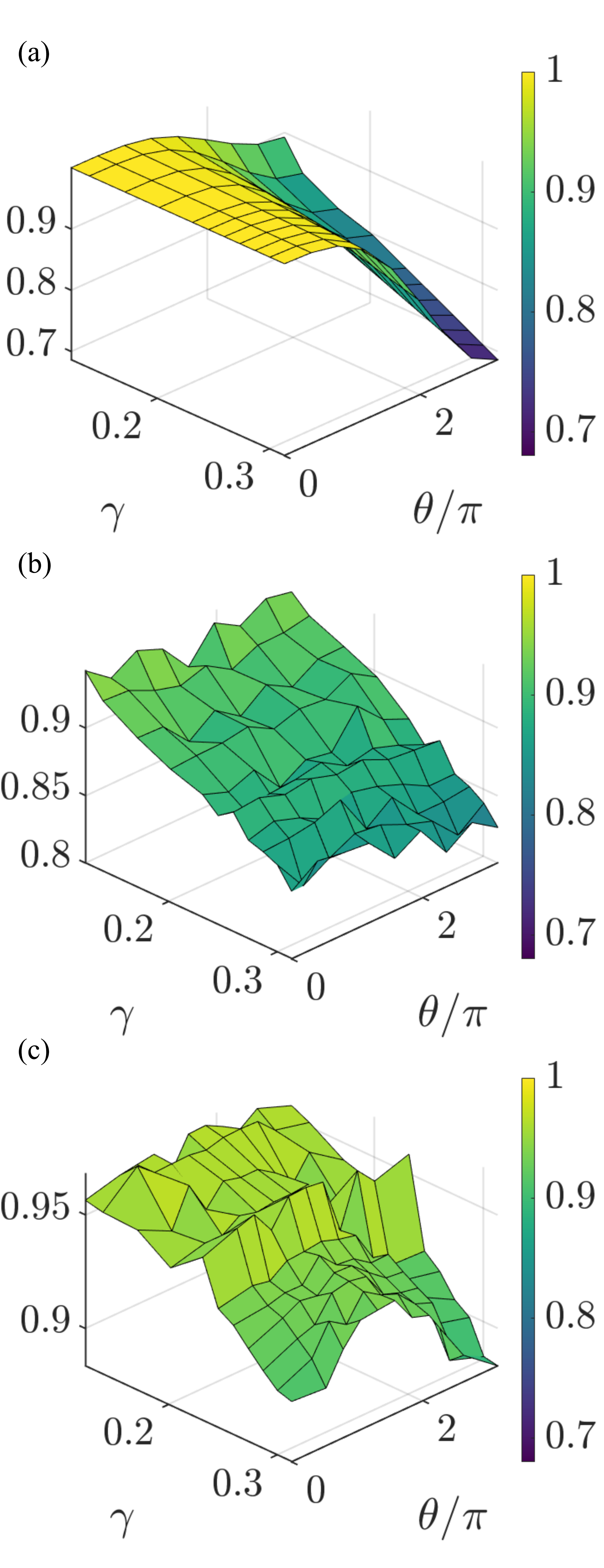}
    \caption{Fidelity maps for (a) an unencoded qubit, (b) error-corrected qubit using the $4$-qubit code and the POVM-based Petz recovery circuit, and, (c) error-corrected qubit using the $4$-qubit code and isometric-extension based Petz recovery circuit. In all three cases, the qubits undergo amplitude-damping noise, simulated using a sequence of identity gates as explained in Appendix~\ref{sec:sim_AD}. The \textsc{cnot} and single-qubit gates used in (b) and (c) are noisy with noise parameters  $\mu = 10^{-6}$ and  $\mu = 10^{-5}$ respectively.}
    \label{fig:fidmap}
\end{figure}

Finally, to emphasize the efficacy of using the Petz recovery channel even in the presence of noisy gates, we simulate our recovery circuits using noisy quantum gates. Unlike the simulation results depicted in Fig.\ref{fig:fid_state} and Fig.\ref{fig:gamma_vs_fid_ket_1}, where the gates are all assumed to be ideal (noiseless), for the simulation results depicted in Fig.\ref{fig:fidmap}, we use noisy gates. Specifically, we assume that the gate-noise in the \textsc{cnot} and single-qubit gates can be modeled as depolarizing noise parameter $\mu$. Figure(\ref{fig:fidmap}) depicts three fidelity maps that allow us to compare the fidelities of the unencoded qubits initialised as $|{\psi}\rangle = \cos\left(\frac{\theta}{2}\right)|{0}\rangle + \sin\left(\frac{\theta}{2}\right)|{1}\rangle$ with the fidelities of the error-corrected qubits, for different values of the damping parameter $\gamma$. The values of $\mu$ for the two different Petz circuits -- one based on isometric extension and the other based on two-outcome POVMs -- have been chosen based on the corresponding threshold values identified in Appendix~\ref{sec:noisy_sim}. 

\subsection{Resource Requirements for Petz recovery circuits}\label{sec:complexity}

With noisy intermediate-scale quantum devices in mind, the objective of any error correcting protocol is to try and optimize the computing resources used. Here, we compare and contrast the resource requirements for the three different recovery circuits outlined in this article. Specifically, we quantify the gate complexity and the number of ancillary qubits required to implement the code-specific Petz map for an $n$-qubit subject to a noise channel with $N = 4$ Kraus operators. 

We first quantify the resource requirements for the Petz circuit implementation in Sec.~\ref{sec:iso petz} based on the isometric extension. It is well known that an $n$-qubit two-level unitary requires $\cO(n^2)$ single-qubit gates and \textsc{cnot} gates \cite{nielsen}. From Lemma~\ref{lem:Petz_isometry}, we note that to implement the $n$-qubit Petz recovery circuit specific to an $n$-qubit code and a noise channel with $N$-Kraus operators via the isometric extension, we require $4^{2n}$ two-level unitaries. Therefore to implement an $n$-qubit Petz recovery channel, we need at most $\cO(n^2 4^{2n})$ single-qubit gates and \textsc{cnot} gates. In addition to this, we also require $\log_2N$ ancillary qubits to implement an $n$-qubit Petz recovery channel.

The POVM-based implementation discussed in Sec.~\ref{sec:petz_povm} requires only $2$ ancillary qubits for any CPTP map of rank greater than two. For a rank-$2$ map, however, it is easy to see that the second ancillary qubit can be dropped, and the circuit can be constructed using just one ancillary qubit. We note that the number of the $U_{M_i}$ matrices is $N^n$, and for each $n+1$-qubit $U_{\cM_i}$, we need two $n$-qubit unitaries $U_i$ and $\Tilde{U}_i$. For implementing each $n$-qubit unitary, we need $\cO((n)^2 4^{n})$ single-qubit and two-qubit \textsc{cnot} gates~\cite{nielsen}. Therefore, for any $n+1-$qubit unitary, we need $\cO((n+1)^2 4^{(n+1)})$ \textsc{cnot} and single-qubit gates. Therefore the total number of gates needed for the implementation of the POVM-based method scales as $\cO( N^n4^n (n^2 +4^{(n+1)}(n+1)^2 ))$, where $N$ is the rank of the noise channel.  

Finally, for the block encoding-based approach in Sec.~\ref{sec:petz_qsvt}, note that we need to perform the singular value decomposition for the $\cE(P)^{-1/2}$. Therefore for an $n-$ qubit $\cE(P)^{-1/2}$, we need to implement the $n$-qubit unitaries $V_1$ and $V_2$ (Appendix~\ref{sec:Appendix B}). Therefore, the total number of the single qubit and the two-qubit \textsc{cnot} gates scales as $\cO(n^2 4^n)$. To realize the unitary $W$, we need to implement the block encoding for $\cE(P)^{-1/2}_D$, which needs $6n\log_2N$ \textsc{cnot} gates. The implementation of the isometric extension for the adjoint channel needs $\cO(n^24^{n}N^n)$ \textsc{cnot} and single-qubit gates. Therefore the total number of gates scales as $\cO(6n\log_2 N+n^2 4^n (1+N^n)) $.  To construct the block encoding for any $n$-qubit operator, we need just one extra ancilla. However, to implement the unitary $W$ in the Eq.\eqref{eq:W_matrix} along with the isometric extension of the channel $\cP$, we need $n\log_2N+2$ ancillary qubits. 

We summarize the resource requirements quantified here in Table~\ref{Tab: Table 1}. Restricting our attention to codes that encode $1$ qubit in $n$, we present a worst-case resource estimation for the different methods, assuming that the single-qubit noise channel has the maximum possible number of Kraus operators, namely, $N=4$.

Thus far, our circuit constructions for the Petz map have relied on knowledge of the Kraus representation of the map. Alternately, we could consider a Lindbladian representation of the Petz recovery channel \cite{Hyukjoon} and design a circuit for the Petz map using its Lindbladians, by following the procedure outlined in the \cite{Lindblad_circ}. Such a circuit would require approximately $\cO(4^{4n})$ single and two-qubit gates, along with $3n$ ancillary qubits, for an $n$-qubit code.

\section{Concluding Remarks}\label{sec:summary}
In this work, we describe three different approaches to obtain circuit implementations of a noise-adapted recovery map, namely the Petz map. Since we focus on the code-specific Petz map here, the resulting recovery circuits can be tailored both to the underlying quantum code as well as the noise channel that the individual qubits are subject to. Apart from the recovery circuits, we also present circuits that can directly estimate the fidelity between the original encoded state and the recovered state.

Of the three approaches discussed here, the circuit construction based on the isometric extension proves to be rather advantageous, both in terms of resourcefulness as well as faithfulness of implementation. Moving away from the standard approach to constructing the isometric extension, we observe here that it suffices to partially decompose the unitary on the extended space, leading to a reduced gate complexity. Furthermore, this approach leads to an \emph{exact} implementation of the Petz recovery map, in contrast to the QSVT-based approach~\cite{gilyen2022_petz}, which realizes the Petz map approximately.

In contrast, the POVM-based approach that we present here is indeed an approximate method to implement the Petz map. However, this approach requires the least number of ancillas, just one in the case of a noise channel with two Kraus operators. We also note here that the isometric extension and the POVM-based approaches to implement the Petz map have an additional advantage over the QSVT-based algorithmic approach -- the latter is probabilistic and requires post-selection, whereas the two approaches presented here are deterministic.

 We also present a third approach which is probabilistic, that uses a combination of the block encoding technique as well as isometric extension. This implementation is comparable to both the isometric extension as well as the recently obtained QSVT-based circuit implementation~\cite{gilyen2022_petz}, in terms of its resource requirements. Like the QSVT-based approach, it is a \emph{probabilistic} implementation, albeit with a higher success probability than that obtained via QSVT. 

Finally, we simulate our circuits on noisy quantum processors to benchmark their performance under both ideal and noisy conditions. For the specific case of amplitude-damping noise, we hence obtain a threshold for \textsc{CNOT} gate-noise below which our recovery circuits can indeed be effective in protecting against the native damping noise of the qubits.
\begin{widetext}
\begin{center}
\begin{table}[t]
    \centering
  \begin{tabular}{|p{2.8cm}||p{3.5cm}|p{2.3cm}|p{2.5cm}|p{2.5cm}|}
\hline
       &\multirow{2}{9em}{\textsc{cnot}s $+$ single-qubit gates}  & Ancilla   &\multirow{2}{7em}{ Approximate }&\multirow{2}{7em}{ Probabilistic } \\
       &                 &                  &                                  & \\
        &                 &                  &                                  & \\
       \hline
        &                 &                  &                                  & \\
 Isometric Extension      &         $\cO(n^2 4^{2n}$) &~~ $2n$        & ~~~No  &~~No \\
  &                 &                  &                                  & \\
  \hline
   &                 &                  &                                  & \\
  POVM      &        $\cO( 4^{2n} (5n^2 + 8n +4))$ & ~~~$2$           & ~~~Yes  & ~~No  \\
   &                 &                  &                                  & \\
       \hline
        &                 &                  &                                  & \\
  Block Encoding     &$\cO( n^{2}4^{2n} + n^2 4^n)$    &  $(2n+2)$                      & ~~~No                                & ~~Yes\\
       \hline
        &                 &                  &                                  & \\
   QSVT~\cite{gilyen2022_petz}    &  $\cO(4^{4n} + n^2 4^n )$    &  $2(2n\,\,+2)$ &~~ Yes &~ Yes\\
    &                 &                  &                                  & \\
   \hline
\end{tabular}
    
     \caption{\label{Tab: Table 1}Resources for different circuit implementations of the Petz recovery map}
\end{table}
\end{center}
\end{widetext}
The techniques outlined in our work are indeed quite general and can be used to obtain resource-efficient circuits for any CPTP map. The fidelity estimation circuits described here might also be of independent interest and find uses beyond the context of QEC. 
Going forward, it will be interesting to study if the noise-adapted recovery circuits described here can be optimised to take into account gate-noise as well as hardware constraints such as qubit connectivity. Our work is thus a first step towards exploring the possibility of implementing noise-adapted recovery maps in a fault-tolerant manner.

\section*{Acknowledgments}
We acknowledge the use of IBM Quantum for this work. The views expressed are those of the authors and do not reflect the official policy or position of IBM or the IBM Quantum team. This research was supported in part by a grant from the Mphasis F1 Foundation to the Centre for Quantum Information, Communication, and Computing (CQuICC). We also acknowledge financial support from the Department of Science and Technology, Government of India, under Grant No. DST/ICPS/QuST/Theme-3/2019/Q59.
\providecommand{\noopsort}[1]{}\providecommand{\singleletter}[1]{#1}%

\appendix

\section{Block encoding of any n-qubit matrix via controlled rotations} \label{sec:Appendix B}
In contrast to the QSVT technique used in \cite{gilyen2022_petz} to construct the block encoding of a function of a matrix, we use an alternate approach based on controlled-rotation gates. Our approach is similar in spirit to the block encoding technique in~\cite{bl_enc}, which, however, makes use of reflection operations rather than rotation gates.
\begin{lemma}\label{lem:block encoding proof}
For any $n$-qubit square matrix $A$ with $||A||\leq 1$, we can construct the exact block encoding via multi-qubit controlled-rotation gates, as shown in Fig.~\ref{fig:blenc_Eph}.  
\end{lemma}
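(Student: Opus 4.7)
The plan is to construct the block encoding via the singular value decomposition. Since $A$ is an $n$-qubit matrix with $\lVert A\rVert\leq 1$, I can write $A=V_1\,\Sigma\, V_2^{\dagger}$, where $V_1,V_2$ are $n$-qubit unitaries and $\Sigma=\mathrm{diag}(\sigma_0,\sigma_1,\ldots,\sigma_{2^n-1})$ with singular values $\sigma_i\in[0,1]$. The condition $\lVert A\rVert\leq 1$ is essential here because it allows us to assign angles $\theta_i=\arccos(\sigma_i)\in[0,\pi/2]$ for each singular value. These angles will be ``dialled in'' on an ancillary qubit by rotation gates controlled on the computational basis states of the system register.

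Next, I would build the block encoding of the diagonal matrix $\Sigma$ on $n+1$ qubits by the controlled-rotation unitary
\begin{equation}
U^{\Sigma} \;=\; \sum_{i=0}^{2^n-1} |i\rangle\langle i|_{\rm sys}\otimes R_y(2\theta_i)_{\rm anc},
\end{equation}
where $R_y(2\theta)=\cos\theta\,I-\upi\sin\theta\,Y$. A direct calculation shows $(\langle 0|_{\rm anc}\otimes I_{2^n})\,U^{\Sigma}\,(|0\rangle_{\rm anc}\otimes I_{2^n})=\sum_i \cos(\theta_i)\,|i\rangle\langle i|=\Sigma$, so $U^{\Sigma}$ is an exact $(1,1,0)$-block encoding of $\Sigma$ in the sense of Sec.~\ref{sec:prelim}. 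Each term in the sum above is implemented as a multi-qubit controlled $R_y(2\theta_i)$ gate in which the $n$-qubit system register serves as the control pattern for basis state $|i\rangle$, matching the structure depicted in Fig.~\ref{fig:blenc_Eph}.

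Finally, I would dress $U^{\Sigma}$ with the SVD unitaries on the system register to recover $A$. Setting
\begin{equation}
U^{A} \;=\; (V_1\otimes I_2)\,U^{\Sigma}\,(V_2^{\dagger}\otimes I_2),
\end{equation}
one immediately verifies that
\begin{equation}
(\langle 0|_{\rm anc}\otimes I_{2^n})\,U^{A}\,(|0\rangle_{\rm anc}\otimes I_{2^n}) \;=\; V_1\,\Sigma\,V_2^{\dagger}\;=\;A,
\end{equation}
and $U^{A}$ is unitary since it is a product of unitaries. This establishes that $U^{A}$ is an exact $(1,1,0)$-block encoding of $A$, requiring only one ancillary qubit together with $V_1,V_2^{\dagger}$ and the chain of $2^n$ multi-controlled $R_y$ rotations.

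The only real subtlety is book-keeping on the rotations: each $R_y(2\theta_i)$ must be activated exclusively on basis state $|i\rangle$, which naively requires $2^n$ multi-qubit controlled gates. I expect the main obstacle is not correctness (which follows from the two lines above) but rather keeping the construction efficient; in particular, care is needed so that the simultaneous action of the controls yields precisely the direct-sum structure $\sum_i|i\rangle\langle i|\otimes R_y(2\theta_i)$ rather than a product that composes rotations incorrectly. Once the controlled-rotation block in Fig.~\ref{fig:blenc_Eph} is interpreted as this direct sum, the lemma follows directly.
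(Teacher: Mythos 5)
Your proposal is correct and follows essentially the same route as the paper: take the SVD $A=V_1\Sigma V_2^{\dagger}$, block-encode the diagonal $\Sigma$ by multi-controlled $R_y$ rotations with angles $\arccos(\sigma_i)$ (your explicit direct-sum $\sum_i |i\rangle\langle i|\otimes R_y(2\theta_i)$ is exactly the paper's block matrix $U^{\Sigma}$), and then conjugate by $V_1$ and $V_2^{\dagger}$ on the system register. The only differences are notational (ancilla ordering and the $R_y$ angle convention), so no further changes are needed.
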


\begin{figure}[t]
    \centering
   \includegraphics[width=1\columnwidth]{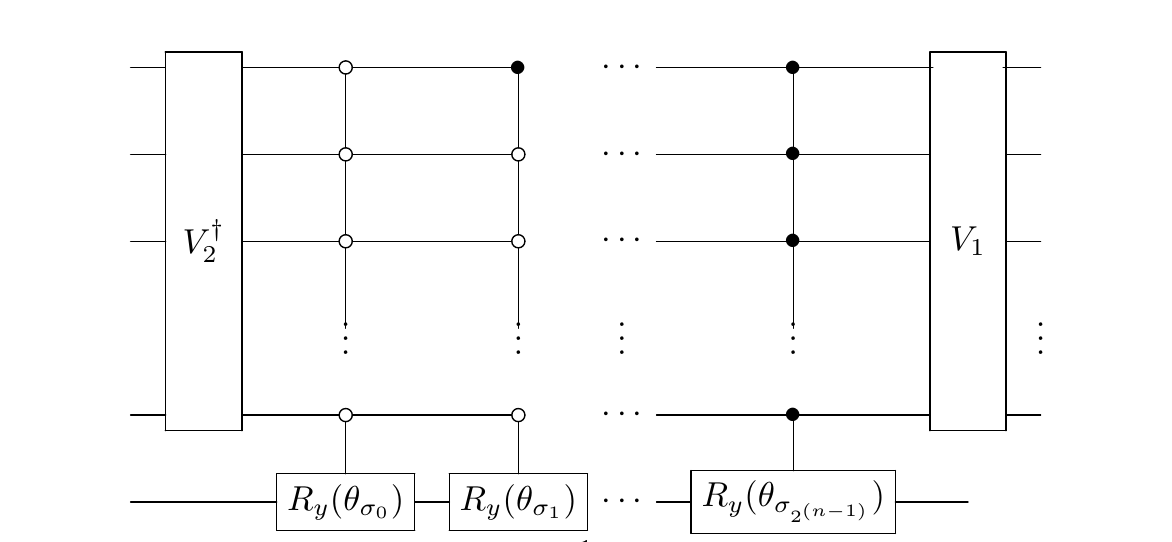}
    \caption{Block encoding of any square matrix $\frac{A}{||A||}$.  }
    \label{fig:blenc_Eph}
\end{figure}

\begin{proof}
For an $n$-qubit square matrix $A$, with $||A||\leq 1$, we consider the singular value decomposition of the matrix $A$, given by, 
\begin{align}
    A &= V_1 \Sigma V_2^{\dag}.
\end{align}
Here, $\sigma$ is a diagonal matrix containing the singular values $\sigma_i$ of $A$, and $V_1$ and $V_2$ are unitary matrices.\\
We first obtain a block encoding of $\Sigma$, via an $(n+1)$-qubit unitary, as, 
\begin{align}\label{eq:Sigma _n+1}
U^{\Sigma}&= \begin{pmatrix}
         [ \Sigma]_{2^n \times 2^n} & & -\sqrt{I_{2^n}-\Sigma^2} \\
         \\
          \sqrt{I_{2^n}-\Sigma^2}& & [ \Sigma]_{2^n \times 2^n}.
\end{pmatrix}
\end{align}
It is now straightforward to see that the block encoding of $\Sigma$ can be implemented in a quantum circuit via multi-controlled rotation gates $R_y(\theta_{\sigma})$, with rotation angles $\theta_{\sigma} = \cos^{-1}{\sigma}$ depending on the singular values, as defined below.
\begin{align*}
     R_y(\theta_{\sigma_i}) & = \begin{pmatrix}
                  \sigma_i & -\sqrt{1-\sigma_i^2}\\
                  \sqrt{1-\sigma_i^2} & \sigma_i
        \end{pmatrix}; \; \theta_{\sigma_{i}} = \cos^{-1}{\sigma_{i}}.
\end{align*}
To achieve the block encoding for the matrix $A$, we then sandwich $U^{\Sigma}$ in between $(I_{2} \otimes V_1)$ and $(I_{ 2} \otimes V_2^{\dag})$ as follows. 
\begin{align}\label{eq:proof of blnc}
  U^A&= (I_{2} \otimes V_1)U^{\Sigma}(I_{2} \otimes V_2)   \nonumber   \\
 \therefore\,\,U^A   &= \begin{pmatrix}
              A& * \\
              * &*
    \end{pmatrix}.
\end{align}
Therefore, the circuit in Fig~\ref{fig:blenc_Eph} implements the $U^A$.
\end{proof}
Note that for a matrix $A$ with $||A||>1$, we can construct the block encoding of $\frac{A}{||A||}$, following the same method as above, by modifying the rotation angles as $\theta_{\sigma} = \cos^{-1}(\frac{\sigma}{||A||})$. Next, we show how the block encoding of $A$ can be used to obtain a block encoding of its block diagonal form $A_{D}$.
\begin{figure}[t]
\centering
    \includegraphics[width = 0.75 \columnwidth]{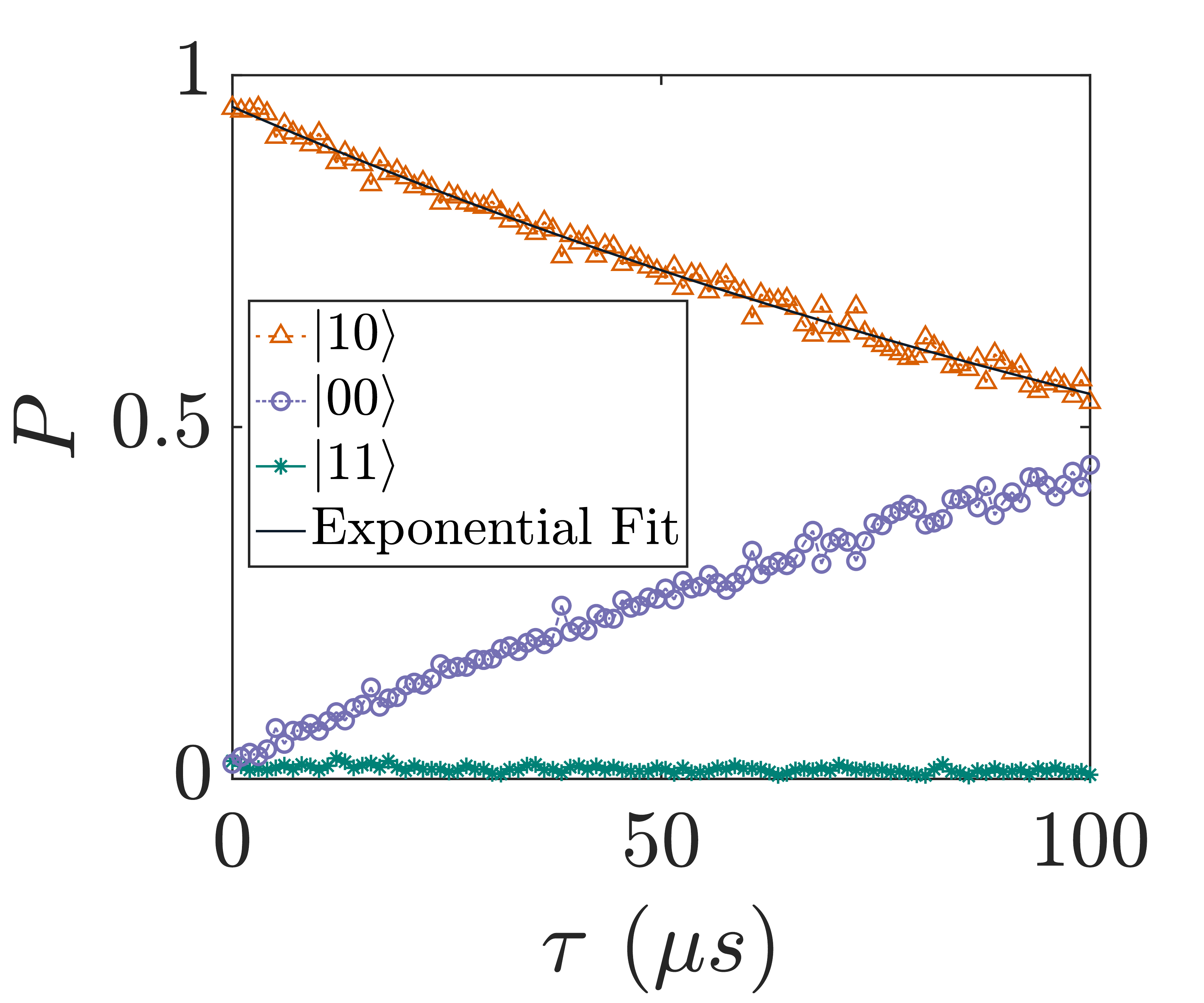}
    \caption{The probability ($P$) of measuring the two-qubit states $|{01}\rangle$, $|{00}\rangle$ and $|{11}\rangle$ after the qubits were initialised in the $|{01}\rangle$ state, and a  delay ($\tau$) was introduced before measurement.}
    \label{fig:experiment}
\end{figure}

\begin{lemma}\label{lem: UAd_proof}
If the $(n+1)$-qubit unitary $U^A$ is a block encoding of an $n-$qubit matrix $A$, then
\begin{align}
 U^A_D &= (I_{2} \otimes SWAP_{mn} ) (U^A \otimes I_{2^m}) (I_{2} \otimes SWAP_{m n})   \label{eq:UAd} .
\end{align}

is a block encoding of the associated block diagonal matrix 
\[ A_{D} = \begin{pmatrix} A& \hdots & 0 \\
                                          0&  \ddots & 0\\
                                          0 &  \hdots & A\end{pmatrix}_{2^{(n+m)} \times 2^{(n+m)}}.\]
                                           
Here, $I_m$ is the $m-$qubit identity matrix and $SWAP_{n m}$ swaps the states of the $n$-qubit system with that of the $m$-qubit system.
\end{lemma}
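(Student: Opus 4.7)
The plan is to verify the defining property of a block encoding directly: namely, that projecting the first (single-qubit) register of $U^A_D$ onto $|0\rangle$ produces $A_D$ on the remaining $(n+m)$-qubit Hilbert space. Concretely, I need to evaluate
\[ (\langle 0|\otimes I_{2^{n+m}})\, U^A_D \, (|0\rangle\otimes I_{2^{n+m}}) \]
and show that it equals $A_D$.

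First I would substitute the definition $U^A_D = (I_2 \otimes SWAP_{mn})(U^A \otimes I_{2^m})(I_2 \otimes SWAP_{mn})$ and note that the outer factors act as the identity on the ancilla qubit. This lets me commute the bra and ket past them, obtaining
\[ SWAP_{mn} \cdot \bigl[(\langle 0|\otimes I_{2^n}) U^A (|0\rangle\otimes I_{2^n}) \otimes I_{2^m}\bigr] \cdot SWAP_{mn}. \]
Now the hypothesis that $U^A$ is an exact block encoding of $A$ gives $(\langle 0|\otimes I_{2^n}) U^A (|0\rangle\otimes I_{2^n}) = A$, so the expression collapses to $SWAP_{mn} (A \otimes I_{2^m}) SWAP_{mn}$.

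The content of the lemma then reduces to the identity $SWAP_{mn}(A\otimes I_{2^m})SWAP_{mn} = I_{2^m}\otimes A$, which I would verify on product basis states $|i\rangle_n|j\rangle_m$: the generalized swap sends $|i\rangle_n|j\rangle_m \mapsto |j\rangle_m|i\rangle_n$, which conjugates $A\otimes I_{2^m}$ into $I_{2^m}\otimes A$. Writing out $I_{2^m}\otimes A$ in block form with the first $m$ qubits indexing the block position and the remaining $n$ qubits indexing within the block immediately gives the block-diagonal matrix $A_D$ with $2^m$ copies of $A$ on the diagonal, completing the argument.

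The main obstacle here is not computational but notational: the symbol $SWAP_{mn}$ must be interpreted as a permutation that exchanges a size-$n$ register with a size-$m$ register, which is not a standard two-qubit SWAP when $m \neq n$. The identification $I_{2^m}\otimes A = A_D$ relies on the convention that the outer tensor factor labels the block position. Once these conventions are pinned down, the proof is essentially a one-line calculation, so the bulk of the writeup is simply making the tensor-factor bookkeeping transparent.
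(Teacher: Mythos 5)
Your proposal is correct and follows essentially the same route as the paper: both compute the top-left block of $U^A_D$, reduce it to $SWAP_{mn}(A\otimes I_{2^m})SWAP_{mn}=I_{2^m}\otimes A$, and identify that with $A_D$. Your version merely spells out the intermediate bookkeeping (commuting the ancilla projectors past the swaps and checking the conjugation on product basis states) that the paper leaves as ``easy to check.''
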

 \begin{proof}
 Consider the top left block of the $U^A_D$ in Eq.\eqref{eq:UAd}. It is easy to check that this evaluates to,
 
 \begin{align}
      \langle0|\otimes I_{2^{(m+n)}} ) U_D^A (|0\rangle\otimes I_{2^{(m+n)}}) &= SWAP_{m n} (A_n \otimes I_{2^m})\nonumber\\
      &SWAP_{m n} =I_{2^m}\otimes A_n  ,
 \end{align}
as desired.
\end{proof}

\section{Simulating Amplitude-Damping Noise }\label{sec:sim_AD}
In this section, we show how the amplitude-damping channel can simply be realised as a sequence of noisy, identity gates by appropriately tuning the  $T_1$ and $T_2$ relaxations that are commonly used to describe the coherent properties of density matrices. The $T_1$ processes describe the relaxation of the diagonal elements, whereas the $T_2$ processes describe the same for the off-diagonal elements. 

The action of these errors on the density matrix can be given as~\cite{nielsen}
\begin{equation}
    \label{equation:T1}
    \rho \rightarrow \begin{pmatrix} 1-e^{\frac{-t}{T_1}}\rho_{11}&  e^{\frac{-t}{T_2}}\rho_{01} \\
                                          e^{\frac{-t}{T_2}}\rho_{01}^* &e^{\frac{-t}{T_1}}\rho_{11}\end{pmatrix}.
\end{equation}
Comparing the above equation with the effect of the amplitude-damping channel on a density matrix, we can see that for $T_2 = 2*T_1$ we get a decoherence process that exactly mimics an amplitude-damping channel with decay rate $\gamma = 1 - e^{\frac{-t}{T_1}}$. We make use of this observation and set the $T_{1}$, $T_{2}$ times of the qubits in our qiskit simulation so as to ensure that the qubits undergo only amplitude-damping noise. 

Once we fix the $T_{1}$ and $T_{2}$ times in our noisy simulation, we can sweep through different values of the noise strength by introducing sequences of identity gates of varying lengths. Applying a sequence of $N$ identity gates, where each gate is of duration $t$ seconds, results in a qubit that is \emph{idling} for $Nt$ seconds. Such a qubit is then effectively subject to an amplitude-damping channel with damping rate $\gamma$, given by
\begin{equation}\label{eq:gamma}
    \gamma = 1-e^{\frac{N*t}{T_1}}.
\end{equation}
Typical hardware implementations fix the time $t$ of the identity gate as $35ns$. We fix this value and vary the length $N$ of the sequence in order to simulate amplitude-damping channels of different noise strengths $\gamma$. \\
Our approach to simulating amplitude-damping noise on superconducting qubit platforms is justified by Fig.~\ref{fig:experiment}, which shows the result of a simple experiment run on the IBM Mumbai quantum processor involving such a sequence of identity gates. The delay introduced by the sequence of identity gates, was varied between $0$ to $100\mu s$. The plot clearly shows the decay of the $|{01}\rangle$ state to the $|{00}\rangle$ state without any noticeable buildup of the $|{11}\rangle$ state, indicating the dominant presence of amplitude-damping noise. The plot also shows an exponential fit ($R^2=0.9904$) for the decay probability of the $|{01}\rangle$ overlayed over the actual hardware data. 

\begin{figure}[t]
    \centering
    \includegraphics[width=1\columnwidth]{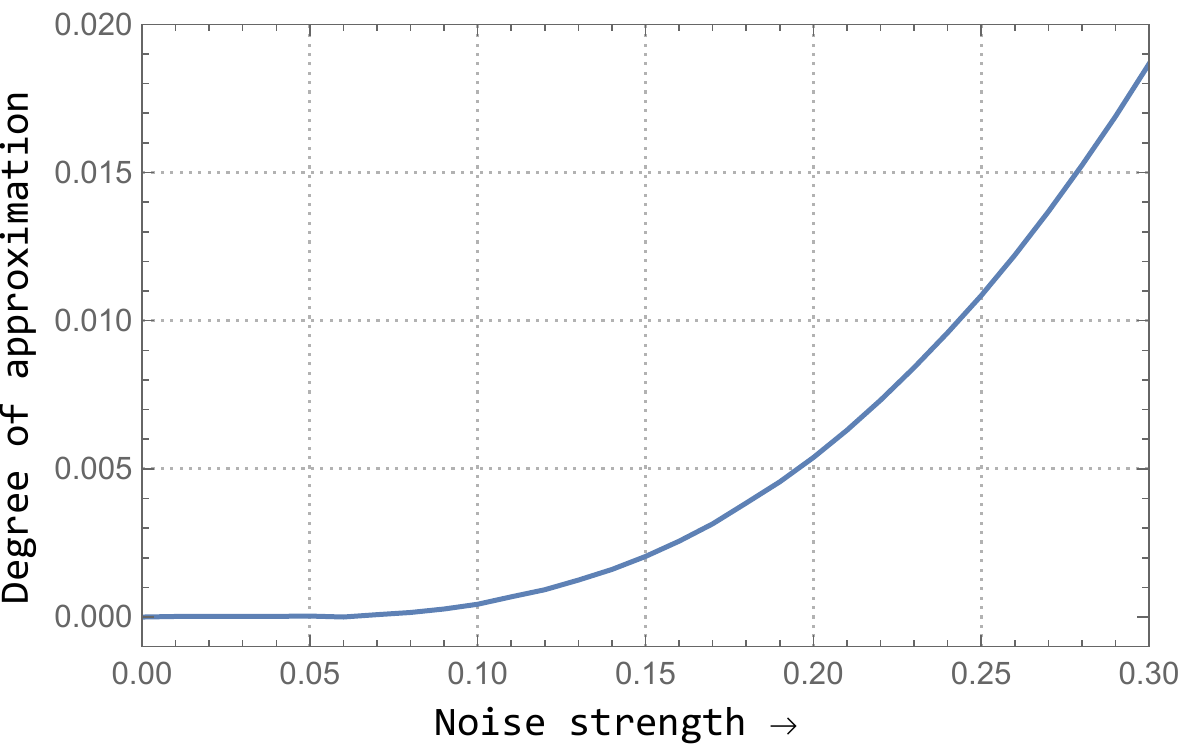}
    \caption{Degree of approximation $\Delta$ as a function of the damping strength $\gamma$.}
    \label{fig:deg_of_app}
\end{figure}

This forms the basis for how we simulate amplitude-damping noise in the simulations presented in Sec.~\ref{sec:petz results}, essentially by making the encoded state \emph{wait} using a sequence of identity gates, before applying the recovery circuits. 

\section{Degree of Approximation for the POVM-based approach}\label{sec:povm_approx}
As discussed in the Sec.\ref{sec:petz_povm}, our circuit construction based on the two-outcome POVM-based decomposition is an approximate method to realize the Petz map (or any CPTP map). Here, we estimate the degree of approximation in terms of the difference in the worst-case fidelity of the actual Petz map and approximated Petz map. The degree of approximation is quantified by, 
\begin{eqnarray}
    \Delta &=& |F^2_{\rm min}(\cC,\Tilde{\cR}_{P,\cE }\circ \cE (\ket{\psi}\bra{\psi})) \nonumber \\
    &&  -  F^2_{\rm min}(\cC,\cR_{P,\cE }\circ \cE (\ket{\psi}\bra{\psi}))|,
\end{eqnarray}
where $\Tilde{R}_{P,\cE}$, is the approximated form of the Petz map. As an example, we compute the degree of approximation $\Delta$ for the code-specific Petz map, specific to the $4$-qubit code in Eq.~\eqref{eq:leung_code}, as a function of the damping strength ($\gamma$). Fig.~\ref{fig:deg_of_app} shows the variation of $\Delta$ as a function of $\gamma$. We note that the mismatch in the performance of the approximated Petz and the actual Petz increases as follows
\begin{align}
    \Delta & =  0.0414\, \gamma^2+ (7\times 10^{-3})\,\gamma + (1.2\times 10^{-4}),
\end{align}
which is effectively quadratic in $\gamma$ as expected from the expresison in Eq.~\eqref{eq:decomp}. 

\section{Noisy Simulation of the Petz Recovery Circuit}\label{sec:noisy_sim}

A majority of the results obtained in Sec.\ref{sec:petz results} are based on ideal simulations using faultless quantum gates. However, the fidelity map plots shown in Fig.~\ref{fig:fidmap} have been obtained using faulty implementations of the Petz recovery circuit, taking into account gate noise on both the \textsc{cnot} as well as the single-qubit gates.

\begin{figure}[t]
\centering
\includegraphics[width=1\columnwidth]{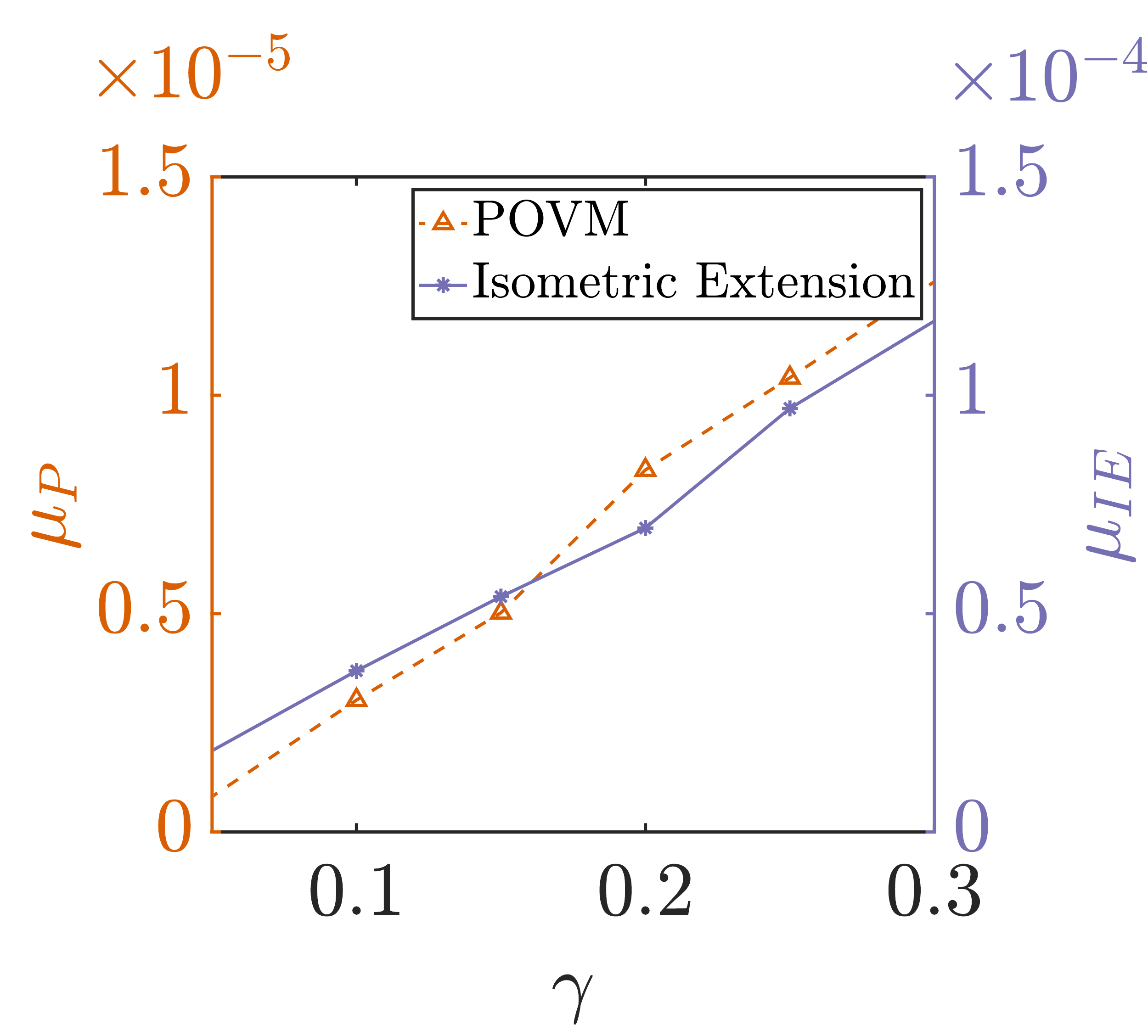}
    \caption{Threshold values $\mu_{IE}$ and $\mu_{P}$ of the depolarizing noise strength on the \textsc{cnot} and single-qubit gates for the Isometric Extension and POVM circuits, respectively.}
    \label{fig:threshold_gamma}
\end{figure}

We perform the noisy simulation in this case, by injecting gate-noise in the qiskit simulator. We assume that the gate-noise can be modelled as a depolarizing channel with noise parameter $\mu$. We expect that the performance of our Petz recovery circuits will deteriorate with increasing value of $\mu$ and would like to identify a \emph{threshold} value of the noise strength upto which our error correction procedure using the Petz recovery circuit succeeds. 

Since the recovery circuits are too complex to analyse, we estimate the gate-noise threshold via noisy simulations. For a given value of damping strength $\gamma$, we define the threshold to simply be the value of noise strength $\mu$ above which the error-corrected qubit has a fidelity smaller than that of the unencoded qubit. Note that this is simply a bound on the noise strength of the physical gates, quite different from a fault-tolerance threshold which comes out of analysing faults in encoded gates.

In Fig.~\ref{fig:threshold_gamma}, we plot the threshold values of the noise strength $\mu$ for two of the Petz recovery circuits, namely, the isometric extension circuit and the POVM-based circuit, as a function of the amplitude-damping strength $\gamma$. The results of our noisy simulations show that the threshold value increases with the increase of the decay strength $\gamma$. For $\gamma\sim0.3$, the POVM-based Petz recovery circuit has a threshold value of the gate noise given by $\mu_P\sim 10^{-5}$ for $\gamma=0.3$. The isometric extension circuit is more tolerant than the POVM-based circuit, with a threshold value $\mu_P\sim 10^{-4}$ for $\gamma \sim0.3$. We attribute the low values of the gate-noise threshold to the fact that the Petz recovery circuits obtained here are not optimized for noisy gates, nor are they designed for fault-tolerant computation. Since the $\gamma$ values range upto $0.3$ in the fidelity maps shown in Fig.~\ref{fig:fidmap}, the noise strengths for the gate-noise have been chosen to lie below the corresponding threshold values in each implementation.

\end{document}